\documentclass[journal]{IEEEtran}
%
% If IEEEtran.cls has not been installed into the LaTeX system files,
% manually specify the path to it like:
% \documentclass[journal]{../sty/IEEEtran}

\usepackage[pdftex,breaklinks,colorlinks,
citecolor=blue,
urlcolor=blue]{hyperref}

\usepackage{cite}
%\ifCLASSINFOpdf
%\usepackage[pdftex]{graphicx}
\usepackage{graphicx}
%\fi
%\graphicspath{{figures/}}

\usepackage{amsmath}
\usepackage{empheq}
\interdisplaylinepenalty=2500

\usepackage{bbm}
\usepackage{amssymb}
\usepackage{mathrsfs}  
\usepackage{amsthm}
\usepackage{bm}

\usepackage{array}
% IEEEtran contains the IEEEeqnarray family of commands that can be used to
% generate multiline equations as well as matrices, tables, etc., of high
% quality.

\usepackage{url}

% *** Do not adjust lengths that control margins, column widths, etc. ***
% *** Do not use packages that alter fonts (such as pslatex).         ***

% correct bad hyphenation here
%\hyphenation{op-tical net-works semi-conduc-tor}

\usepackage[braket, qm]{qcircuit}
%\graphicspath{{Figures/}}

%\setcounter{tocdepth}{3}% to get subsubsections in toc
%\let\oldtocsection=\tocsection
%\let\oldtocsubsection=\tocsubsection
%\let\oldtocsubsubsection=\tocsubsubsection
%\renewcommand{\tocsection}[2]{\hspace{0em}\oldtocsection{#1}{#2}}
%\renewcommand{\tocsubsection}[2]{\hspace{1em}\oldtocsubsection{#1}{#2}}
%\renewcommand{\tocsubsubsection}[2]{\hspace{2em}\oldtocsubsubsection{#1}{#2}}

%%%%%%%%%%%%%%%%%%%%%%%%%%%%%%%%%%%%%%%%%%%%%%%%%%
\usepackage[textsize=small]{todonotes}

\usepackage{braket}

%%%%%%%%%%%%%%%%%%%%%%%%%%%%%%%%%%%%%%%%%%%%%%%%%%%%%%%%%%%%%%%%%%%%%%%%
\newcommand{\CC}{\mathbb{C}}
\newcommand{\EE}{\mathbb{E}}
\newcommand{\NN}{\mathbb{N}}
\newcommand{\PP}{\mathbb{P}}
\newcommand{\PPb}{\overline{\mathbb{P}}}
\newcommand{\RR}{\mathbb{R}}
\newcommand{\bV}{\mathbf{V}}
\newcommand{\VV}{\mathbb{V}}
\newcommand{\bU}{\mathbf{U}}

\newcommand{\D}{\mathrm{d}}
\newcommand{\E}{\mathrm{e}}
\newcommand{\I}{\mathrm{i}}

\newcommand{\Cc}{\mathcal{C}}
\newcommand{\Ff}{\mathcal{F}}
\newcommand{\Ffov}{\overline{\mathcal{F}}}
\newcommand{\Hh}{\mathcal{H}}

\newcommand{\nq}{\mathfrak{n}}
\newcommand{\nqb}{\overline{\mathfrak{n}}}
\newcommand{\pf}{\mathfrak{p}}
\newcommand{\LOnefhat}{L^1[\widehat{f}]}

\newcommand{\half}{\frac{1}{2}}

\newcommand{\ww}{\boldsymbol{w}}
\newcommand{\xx}{\boldsymbol{x}}
\newcommand{\bx}{\boldsymbol{\xi}}
\newcommand{\ab}{\boldsymbol{a}}
\newcommand{\bb}{\boldsymbol{b}}
\newcommand{\gb}{\boldsymbol{\gamma}}
\newcommand{\Af}{\boldsymbol{\mathbf{A}}}
\newcommand{\Bf}{\boldsymbol{\mathbf{B}}}
\newcommand{\Wf}{\boldsymbol{W}}
\newcommand{\ttheta}{\boldsymbol{\theta}}
\newcommand{\TTheta}{\boldsymbol{\Theta}}
\newcommand{\ftrn}{f_{n,\ttheta}^{R}}
\newcommand{\gtrn}{g_{n,\ttheta}^{R}}
\newcommand{\Lbarf}{\overline{L}_{2}[f]}

%%%%%%%%%%%%%%%%%%%%%%%%%%%%%%%%%%%%%%%%%%%%%%%%%%%%%%%%%%%%%%%%%%%%%%%%%%%%%%%%%%%%%%%%%%%%%%%%%%%%%%%%%%%%%%%%%%%%%%%%%%%%%%%%
%%%%%%%%%%%%%%%%%%%%% Quantum Commands %%%%%%%%%%%%%%%%%%%%%
\newcommand{\Ug}{\mathtt{U}}
\newcommand{\Ig}{\mathtt{I}}
\newcommand{\Cg}{\mathtt{C}}

\newcommand{\Hg}{\mathtt{H}}

\newcommand{\Rgy}{\mathtt{R}_{\mathrm{y}}}
\newcommand{\Rgz}{\mathtt{R}_{\mathrm{z}}}

\newcommand{\Vg}{\mathtt{V}}

\newcommand{\R}{\mathbb{R}}

\newcommand{\C}{\mathbb{C}}

\newcommand{\N}{\mathbb{N}}

\renewcommand{\P}{\mathbb{P}}

%%%%%%%%%%%%%%%%%%%%%%%%%%%%%%%%%%%%%%%%%%%%%%%%%%%%%%%%%%%%%%%%%%%%%%%%%%%%%%%%%%%%%%%%%%%%%%%%%%%%%%%%%%%%%%%%%%%%%%%%%%%%%%%%
%
%\newcommand{\notthis}[1]{}
%%%%%%%%%%%%%%%%%%%%%%%%%%%%%%%%%%%%%%%%%%%%%%%%%%%%%%%%%%%%%%%%%%%%%%%%
\newtheorem{theorem}{Theorem}[section]
\newtheorem{corollary}[theorem]{Corollary}
\newtheorem{lemma}[theorem]{Lemma}
\newtheorem{proposition}[theorem]{Proposition}

\theoremstyle{definition}
\newtheorem{definition}[theorem]{Definition}
\newtheorem{remark}[theorem]{Remark}
\newtheorem{example}[theorem]{Example}

%%%%%%%%%%%%%%%%%%%%%%%%%%%%%%%%%%%%%%%%%%%%%%%%%%%%%%%%%%%%%%%%%%%%%%%%

%\definecolor{ocean}{rgb}{0,0.1,0.6}
%\def\ocean#1{\textcolor{ocean}{#1}}
%\def\red#1{\textcolor{red}{#1}}
%\def\blue#1{\textcolor{blue}{#1}}

%\numberwithin{equation}{section}
%%%%%%%%%%%%%%%%%%%%%%%%%%%%%%%
%\title[Universal Approximation Theorem for Quantum Neural Networks]{Universal Approximation Theorem and error bounds for quantum neural networks and quantum reservoirs}
%\author{Lukas Gonon}
%\address{Department of Mathematics, Imperial College London}
%\email{l.gonon@imperial.ac.uk}
%\author{Antoine Jacquier}
%\address{Department of Mathematics, Imperial College London, and the Alan Turing Institute}
%\email{a.jacquier@imperial.ac.uk}
%\keywords{}
%\subjclass[2010]{68Q12, 68T07, 65D15}
%
%
%\date{\today}

\begin{document}

\title{Universal Approximation Theorem and Error Bounds for Quantum Neural Networks and Quantum Reservoirs}

\author{Lukas~Gonon and
	Antoine~Jacquier% <-this % stops a space
	\thanks{L. Gonon is with the School of Computer Science, University of St. Gallen, Switzerland (e-mail: lukas.gonon@unisg.ch). A. Jacquier is with the Department of Mathematics, Imperial College London, UK (e-mail: a.jacquier@imperial.ac.uk). %L. Gonon is also affiliated Imperial-X (I-X). 
	A.  Jacquier  is also affiliated with the the Alan Turing Institute.}% <-this % stops a space
}

% The paper headers
\markboth{Gonon and Jacquier: Universal Approximation Theorem for Quantum Neural Networks}%
{Gonon and Jacquier: Universal Approximation Theorem for Quantum Neural Networks}
% The only time the second header will appear is for the odd numbered pages
% after the title page when using the twoside option.
% 
% *** Note that you probably will NOT want to include the author's ***
% *** name in the headers of peer review papers.                   ***
% You can use \ifCLASSOPTIONpeerreview for conditional compilation here if
% you desire.

% If you want to put a publisher's ID mark on the page you can do it like
% this:
%\IEEEpubid{0000--0000/00\$00.00~\copyright~2015 IEEE}
% Remember, if you use this you must call \IEEEpubidadjcol in the second
% column for its text to clear the IEEEpubid mark.

\maketitle

% As a general rule, do not put math, special symbols or citations
% in the abstract or keywords.

\begin{abstract}
Universal approximation theorems are the foundations of classical neural networks,
providing theoretical guarantees that the latter are able to approximate maps of interest.
Recent results have shown that this can also be achieved in a quantum setting,
whereby classical functions can be approximated by parameterised quantum circuits.
We provide here precise error bounds for specific classes of functions
and extend these results to the interesting new setup of randomised quantum circuits,
mimicking classical reservoir neural networks. Our results show in particular that a quantum neural network with $\mathcal{O}(\varepsilon^{-2})$ weights and  $\mathcal{O} (\lceil \log_2(\varepsilon^{-1}) \rceil)$ qubits suffices to achieve {approximation error} $\varepsilon>0$ when approximating  functions with integrable Fourier transforms.
\end{abstract}

% Note that keywords are not normally used for peerreview papers.
\begin{IEEEkeywords}
Universal approximation theorem, error bounds, quantum neural networks, quantum reservoir computing, random features,  machine learning
\end{IEEEkeywords}

% For peer review papers, you can put extra information on the cover
% page as needed:
% \ifCLASSOPTIONpeerreview
% \begin{center} \bfseries EDICS Category: 3-BBND \end{center}
% \fi
%
% For peerreview papers, this IEEEtran command inserts a page break and
% creates the second title. It will be ignored for other modes.
\IEEEpeerreviewmaketitle

\section{Introduction}
% The very first letter is a 2 line initial drop letter followed
% by the rest of the first word in caps.
% 
% form to use if the first word consists of a single letter:
% \IEEEPARstart{A}{demo} file is ....
% 
% form to use if you need the single drop letter followed by
% normal text (unknown if ever used by the IEEE):
% \IEEEPARstart{A}{}demo file is ....
% 
% Some journals put the first two words in caps:
% \IEEEPARstart{T}{his demo} file is ....
% 
% Here we have the typical use of a "T" for an initial drop letter
% and "HIS" in caps to complete the first word.
\IEEEPARstart{A}{rtificial} neural networks were devised decades ago to approximate arbitrary functions by some algorithmically generated composition of maps.
With an increasing level of generality, {Kolmogorov~\cite{Kolmogorov1957}, Arnold~\cite{Arnold1957},} Cybenko~\cite{cybenko1989approximation}, Hornik~\cite{hornik1991}, Hornik, Stinchcombe, White~\cite{hornik1989multilayer}, Leshno, Lin, Pinkus, Schocken~\cite{leshno1993multilayer} proved that spaces of artificial neural networks (with arbitrary width) are in fact dense within some spaces of functions, 
giving rise to the notion of \emph{universal approximations}.
The case of arbitrary depth and fixed width was studied later by 
Gripenberg~\cite{gripenberg2003approximation} and Kidger and Lyons~\cite{kidger2020universal}{,} among many others. Bounds on the approximation errors have been obtained by Barron~\cite{Barron1992, Barron1993, Barron1994ApproximationAE}, Mhaskar~\cite{Mhaskar1996} and in many other works, for instance in~\cite{Boelcskei2019, Guehring2020, yarotsky2017error}. We refer to \cite{Guehring2023} for a review of existing results.
These results---and many subsequent papers refining them---provide theoretical grounds for the use of neural networks in applications.
While the generated function spaces are not constructive, they nevertheless guarantee that complicated and seemingly intractable functions (possibly in non-Euclidean or even infinite-dimensional spaces) can be well approximated by easy-to-implement neural networks, the simplest of them being feedforward neural networks.

Over the past few years, the rise of quantum computing capabilities, while still in their infancy, 
has opened the gates to (small-scale) applications, in particular optimisation 
(through quantum annealing, as developed by \texttt{D-Wave}) and machine learning.
There has been wide interest in searching for so-called \emph{quantum supremacy}, 
or at least (and more realistically) \emph{quantum advantage} 
(a term coined by Preskill in~\cite{preskill2012quantum}) and the many empirical results available at the moment, 
in biology~\cite{cordier2022biology} or in finance~\cite{stamatopoulos2022towards}, for example, 
indicate that more research is needed to provide clear benefits for real-life applications.
In the context of machine learning specifically, 
quantum computing essentially replaces layers of feedforward neural networks by parameterised quantum circuits; the similarities end here as the non-linearity of classical activation functions (such as {tanh or ReLU}) find no specific equivalent in quantum neural networks, 
and a different approach is thus required. 
The fundamental difference of outputs 
(a vector on the real line for a classical neural network and a discrete probability distribution for its quantum version) also requires a different thought process.

{Weight randomisation is a popular technique in classical machine learning, employed to circumvent potential challenges arising from the high-dimensional parameter space. This paradigm, at the core of classical random features \cite{RahimiRecht2008a, RahimiRecht2008}, extreme learning machines~\cite{HCS2006} and reservoir computing~\cite{Jaeger2001, LukoseviciusJaeger2009, TANAKA2019100}), gives the advantage of reducing the dimension of the optimisation problem, thereby being more amenable to applications. Universality results for such systems have been obtained in~\cite{RC8, RC20, GrigOrtega2018, GrigoryevaOrtega2018, Gonon2021, RC12}.
Analogously to the classical case, weight randomisation has also been employed in quantum machine learning.
Quantum reservoir computing~\cite{FujiiNakajima2017} and    quantum extreme learning machines have been extensively studied in the past years{, both} from a theoretical and from an empirical perspective. {We refer, for example, to} \cite{Dasgupta2020, MPOrtega2022, MOLTENI2023, Suzukietal2021} and the review papers~\cite{Ghoshetal2021, Mujaletal2021}.}

The present paper focuses on the theoretical aspects of quantum neural networks{, quantum random neural networks and quantum extreme learning machines}.
{A universal approximation theorem for quantum neural networks was obtained in} \cite{PerezSalinas2020datareuploading,perez2021one}
by constructing a one-qubit quantum circuit able to arbitrarily approximate any continuous complex-valued function.
A similar approach was carried out in~\cite{Schuld2021}, {showing} that data encoding can be approximated by infinitely repeating (akin to infinite-width classical neural networks)  simple encoding schemes based on Pauli gates.
Both papers rely on Fourier series representations of the function to be approximated,
a very natural path because of their trigonometric interpretations, similar to the actions of (quantum) rotation matrices.  
{In a dynamic setting, universal approximation properties have  been established for various types of quantum reservoir computing systems. In particular, it has been shown that fading-memory input-output systems can be approximated universally, using  certain types of dissipative quantum systems \cite{Chen2019,Chenetal2020}, and using a Gaussian quantum harmonic oscillator network \cite{Nokkala2021}.
}

{However, in contrast to the case of classical neural networks{,} for quantum neural networks no approximation error bounds have been available in the previous literature {neither in the static nor the dynamic setting}. In particular, there has been no understanding on the  number of qubits and the size of the quantum circuit that is required in order to guarantee a certain {approximation} accuracy. 
The goal of the present article is to move one step further and to prove precise error bounds for these approximations {in the static setting}. We cover both the case of trainable quantum neural networks and the case of quantum random neural networks or quantum extreme learning machines. 
}
{The question as to whether quantum random features overpower their classical counterparts is still an open question.
The genesis of this paper is anchored in the hope---supported by some empirical evidence~\cite{Schuld2021, wu2021expressivity, wu2024expressivity, wu2024randomness}---that quantum neural networks provide more expressivity, 
and we leverage this randomisation aspect from classical neural networks to facilitate the use of such quantum networks,
via dimension reduction.
}

{Our first main contribution}
consists of error bounds for a universal approximation theorem for continuous functions, bounded in~$L^{1}$ with mild constraints on their Fourier transforms. 
To do so, we explicitly build a parameterised quantum circuit and prove that a set of hyperparameters (or rotation angles) can achieve accurate estimation. More precisely, we show that a quantum neural network with $\mathcal{O}(\varepsilon^{-2})$ weights and  $\mathcal{O} (\lceil \log_2(\varepsilon^{-1}) \rceil)$ qubits suffices to achieve {approximation error at most}  $\varepsilon>0$ when approximating  functions with integrable Fourier transform.
{In particular, in order to refine the approximation accuracy the required additional number of qubits only grows very slowly. }
{Our second main contribution then consists in analogous results for quantum random neural networks. Thus, the original quantum circuit is replaced by a reservoir quantum circuit, where all the unitary operators apart from the last one, are randomised and frozen. We also show that these circuits are able to approximate a large class of continuous functions at {approximation error}  $\varepsilon>0${,} with only $\mathcal{O}(\varepsilon^{-2})$ random weights and with the number of qubits only growing logarithmically in $\varepsilon^{-1}$. As a corollary, we also obtain a universal approximation result for quantum extreme learning machines when the error is measured with respect to the $L^2$-norm. 
By providing quantitative universal approximation results, 
the obtained error bounds provide an important theoretical foundation for trainable and randomised quantum neural networks.
}

{An important characteristic of the obtained bounds is that the approximation rates do not deteriorate as the dimension of the input data increases. In contrast, methods---applicable to general continuous functions---based on quantum feature maps \cite{Goto2021} would generally require $\mathcal{O}(\varepsilon^{-1})$ qubits and $\mathcal{O}(\varepsilon^{-d})$ measurement basis functions 
(where~$d$ is the data dimension) to achieve {approximation error} $\varepsilon>0$.
In particular, for small {approximation error} $\varepsilon$ the number of required observables grows exponentially in the dimension $d$, that is, the \textit{curse of dimensionality} occurs for quantum feature maps. In contrast, the approximation error rate obtained here for quantum neural networks does not suffer from the curse of dimensionality {when employed to approximate continuous, integrable functions with integrable Fourier transform.} 

{The remainder of the article is structured as follows. In Section~\ref{sec:variational} we construct the universal variational quantum circuit and state the first quantum neural  network approximation error result. Section~\ref{sec:ReservoirQ} then introduces the modified circuit with randomly generated weights and proves an analogous result. In Section~\ref{sec:variationalInfty}, we complement these results by bounds with respect to the stronger $L^\infty$-error metric. Section~\ref{sec:conlcusion} concludes the article. All proofs are collected in Section~\ref{sec:proofs}.
}

%%%%%%%%%%%%%%%%%%%%%%%%%%%%%%%%%%%%%%%
%%%%%%%%%%%%%%%%%%%%%%%%%%%%%%%%%%%%%%%
\section{Approximation error bounds for variational quantum circuits}
\label{sec:variational}
{In this section we consider the most popular family of quantum neural networks: variational quantum circuits with trainable parameters. We start by introducing the considered variational quantum circuits. Subsequently, in Theorem~\ref{thm:Approx} we then obtain approximation error bounds for our circuits and target functions with integrable Fourier transform. Finally, in Corollary~\ref{cor:universality} we prove that the considered circuits are universal. For convenience, key notation for Section~\ref{sec:variational} is collected in Table~\ref{tab:symbols1}.}

\begin{table}[h!]
	\centering
	\renewcommand{\arraystretch}{1.3}
	\caption{Key notation for Section~\ref{sec:variational}}
	\label{tab:symbols1}
	{\color{black}
		\begin{tabular}{||c c c||} 
						\hline
			{	Symbol} & {Meaning} & {Definition} \\
			\hline
			$\ttheta$ & variational quantum circuit parameters & Section~\ref{subsec:circuit} \\ 
			$\xx$ & inputs & Section~\ref{subsec:circuit} \\ 
			$n$ & accuracy parameter & Section~\ref{subsec:circuit} \\
			$ \nq$ &  number of qubits & Definition~\ref{def:QC} \\ 
			$\Cg_{\nq}(\ttheta,\xx)$ & variational quantum circuit & Definition~\ref{def:QC} \\ 
			$\P_m^{n}$ & measured probabilities  & Definition~\ref{def:QC} 
			\\ $\ftrn$ & quantum neural network & \eqref{eq:qnn}
			\\ 
			$\LOnefhat$ & Fourier integral &\eqref{eq:Lonefhat}  \\
			$\Ff$ & space of continuous, integrable functions & \eqref{eq:Space1} \\
			$\Ff_R$ & functions in $\Ff$ with $\LOnefhat\leq R$ & \eqref{eq:Space1R}  \\ [1ex] 
			\hline
		\end{tabular}
	}
\end{table}

\subsection{Construction of a universal variational quantum circuit}
\label{subsec:circuit}
%Consider the following variational quantum circuit: 
{In this section we construct the considered variational quantum circuits. For an accuracy parameter} $n \in \N$, 
weights $\ab = (\ab^{1},\ldots,\ab^{n}) \in (\RR^d)^n$, $\bb = (b^{1},\ldots,b^{n}) \in \R^n$, 
$\gb = (\gamma^{1},\ldots,\gamma^{n}) \in [0,2\pi]^n$ and an input $\xx=(x_1,\ldots,x_d) \in \RR^d$
define the gates $\Ug^{(i)}_1 := \Ug^{(i)}_1\left(\ab^{i},b^{i}, \xx\right)$ and $ \Ug^{(i)}_2  :=  \Ug^{(i)}_2 \left(\gamma^{i}\right)$ acting on a single qubit:
\begin{equation*}
\begin{array}{cl}
\Ug^{(i)}_1 
& := \Hg \, \Rgz\left(-b^{i}\right) \Rgz\left(-a^{i}_d x_d\right) \cdots \Rgz\left(-a^{i}_1 x_1\right) \Hg,
\\
\Ug^{(i)}_2 & := \Rgy\left(\gamma^{i}\right),
\end{array}
\end{equation*}
with~$\Hg$ the Hadamard gate and~$\Rgy$, $\Rgz$ the rotations around the $\mathrm{Y}$-and the $\mathrm{Z}$-axis respectively: 
$$
\Rgy(\gamma)
:= \begin{pmatrix} \cos\left(\frac{\gamma}{2}\right) & - \sin\left(\frac{\gamma}{2}\right) \\ \sin\left(\frac{\gamma}{2}\right) & \cos\left(\frac{\gamma}{2}\right) \end{pmatrix}
\text{,} \,\,
\Rgz(\alpha) := \begin{pmatrix} \E^{-\I \frac{\alpha}{2}} & 0 \\ 0 & \E^{\I \frac{\alpha}{2}} \end{pmatrix},
$$
for $\alpha\in\RR$ and $\gamma \in [0,2\pi]$.
Now write $\ttheta=(\ab^{(i)},b^{(i)},\gamma^{(i)})_{i=1,\ldots,n}
\in \TTheta := (\RR^d\times\RR\times[0,2\pi])^n$, $\bar{\Ug}^{(i)} = \Ug^{(i)}_1 \otimes \Ug^{(i)}_2 $ 
and define the block matrix $\Ug := \Ug(\ttheta,\xx)$ by 

\[
\Ug :=   \begin{bmatrix}
\bar{\Ug}^{(1)} & \mathbf{0}_{4 \times 4}  & \mathbf{0}_{4 \times 4} & \cdots &  \mathbf{0}_{4 \times 4} & \mathbf{0}_{4 \times n_0} \\
\mathbf{0}_{4 \times 4} & \bar{\Ug}^{(2)} & \mathbf{0}_{4 \times 4} & \cdots &  \mathbf{0}_{4 \times 4} & \vdots  \\
\vdots &  & \ddots &  & \vdots & \vdots \\
\mathbf{0}_{4 \times 4} & \cdots &  \mathbf{0}_{4 \times 4} &  \bar{\Ug}^{(n-1)} & \mathbf{0}_{4 \times 4} & \vdots 
\\
\mathbf{0}_{4 \times 4} & \cdots &  \cdots & \mathbf{0}_{4 \times 4} & \bar{\Ug}^{(n)} & \mathbf{0}_{4 \times n_0}
\\
{\bf 0}_{n_0 \times 4} & \cdots & \cdots & \cdots & {\bf 0}_{n_0 \times 4} & {\bf 1}_{n_0 \times n_0}
\end{bmatrix},
\]
with $n_0$ {such that the matrix dimension $4n+n_0$ is a power of~$2$}.\footnote{{Precisely, $n_0$ is the smallest integer in~$\N_0$ such that $\log_2(4n+n_0) \in \N$.}}  
Then $\Ug \in \C^{(4n+n_0) \times (4n+n_0)}$ is unitary and can be viewed as a gate operating on $\nq := \lceil \log_2(4n) \rceil $ qubits. Let $N=4n+n_0=2^{\nq}$ and $\Vg \in \C^{N \times N}$ any unitary matrix that maps 
$\ket{0}^{\otimes \nq}$ to the state $\ket{\psi} = \frac{1}{\sqrt{n}} \sum_{i=0}^{n-1} \ket{4i}$. {We refer to Appendix~\ref{sec:n0} for a more detailed discussion on the choice of~$n_0$ and the construction of $\ket{\psi}$.}

\begin{remark}
Since $\Vg\ket{0}^{\otimes\nq} = \ket{\psi}$, the first column of~$\Vg$ (seen as a matrix in $\CC^{N\times N}$) corresponds to~$\ket{\psi}$ (seen as a vector in $\CC^{N}$).
We provide an explicit (but not necessarily unique) construction for such a unitary matrix~$\Vg$
given this constraint on the first column. 
Indeed, set 
$\Vg := 2\ket{\varphi}\bra{\varphi} - \Ig$,
with 
$$
\ket{\varphi} := \frac{\ket{0}+\ket{\psi}}{\sqrt{2\left(1+\braket{0|\psi}\right)}},
$$
where we write $\ket{0}$ in place of $\ket{0}^{\otimes\nq}$ for brevity here.
In this case, immediate computations show that~$\Vg$ is unitary since 
$\Vg^\dagger := 2\ket{\varphi}\bra{\varphi} - \Ig = \Vg$
%\qquad\text{and}\qquad
and
$\Vg\Vg^\dagger = \Vg^\dagger\Vg = \Ig$.
Furthermore,
\begin{align*}
\Vg\ket{0} & = \left(2\ket{\varphi}\bra{\varphi} - \Ig\right)\ket{0}\\
 & = \left(2\frac{\ket{0}+\ket{\psi}}{\sqrt{2\left(1+\braket{0|\psi}\right)}}\frac{\bra{0}+\bra{\psi}}{\sqrt{2\left(1+\braket{0|\psi}\right)}} - \Ig\right)\ket{0}\\
 & = \frac{\ket{0}\bra{0}+\ket{0}\bra{\psi}+\ket{\psi}\bra{0}+\ket{\psi}\bra{\psi}}{1+\braket{0|\psi}}
 \ket{0} - \ket{0}\\
 & = \frac{\ket{0}+\ket{0}\braket{\psi|0}+\ket{\psi}+\ket{\psi}\braket{\psi|0}}{1+\braket{0|\psi}} - \ket{0}\\ 
  & = \frac{\ket{0}\left(1+\braket{\psi|0}\right)+\ket{\psi}\left(1+\braket{\psi|0}\right)}{1+\braket{0|\psi}} - \ket{0}
  = \ket{\psi},
\end{align*}
where we used the fact that $\braket{0|\psi} \in \RR$ in the third line.
The property $\Vg\ket{0}^{\otimes\nq} = \ket{\psi}$ is in fact the only property of $\Vg$ that is required, otherwise  $\Vg$  does not have any impact on the scheme or the error bounds. In case there are several alternative choices for $\Vg${,}  one may thus select the one that is most suitable from the perspective of hardware requirements and limitations.
\end{remark}
We can now measure the state of the $\nq$-qubit system after applying the gates $\Vg$ and $\Ug$. 
The possible states that we could measure are ${0,\ldots,N-1}$. 
For $m \in \{0,1,2,3\}$ we denote by~$\P_{m}$ the probability that the measured state is in  $\{m,4+m,\ldots,4(n-1)+m\}$.
By running the circuit~$S$ times we may estimate these probabilities by
\begin{equation}\label{eq:probestimate}
\widehat{\P}_{m}^{n} := \frac{1}{S} \sum_{s=1}^S \mathbbm{1}_{\{m,4+m,\ldots,4(n-1)+m\}}(i^{(s)}),
\end{equation}
with $i^{(s)}$ the measured state in the $i$-th shot. 
{For conciseness} we will assume~$S$ large enough and will consider $\P_{m}^{n} = \P_{m}^{n}(\ttheta,\xx)$ as the output of the quantum circuit. {As shown in Appendix~\ref{appendixC} this is not a limitation; the Monte Carlo error can be easily taken into account, leading to an additional error of order  $1/\sqrt{S}$ for independent shots.}
We summarise this construction in the following formal definition:

\begin{figure}
\centering
\includegraphics[width=2.5in]{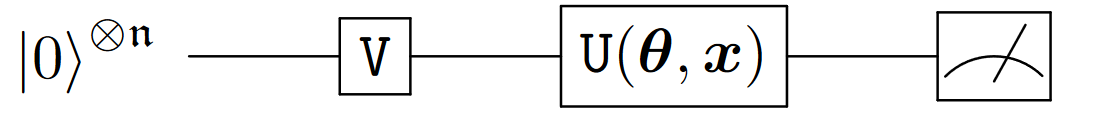}
\caption{{Abstract description of the quantum circuit in Definition~\ref{def:QC}. The initial state $\ket{0}^{\otimes \nq}$ is processed through the quantum gates~$\Vg$ and~$\Ug$ and then measured.}}
\label{fig:abstr}
\end{figure}

\begin{definition}\label{def:QC}
For $n \in \NN$ and $\ttheta \in\TTheta$, 
we define the unitary operator $\Cg_{\nq}(\ttheta,\xx) := \Ug(\ttheta,\xx)\Vg$ acting on $\nq = \lceil \log_2(4n) \rceil $ qubits, namely the realisation of the variational quantum circuit
{drawn abstractly in Fig.~\ref{fig:abstr}. The circuit  acts on the initial state $\ket{0}^{\otimes \nq}$ via the quantum gates $\Vg$ and $\Ug$ and then measures the state.}
 Furthermore, for any $m \in \{0,1,2,3\}$, we let
 $\P_m^{n} := \PP\left("\Cg_{\nq}(\ttheta,\xx)\ket{0}^{\otimes \nq} \in \{m,4+m,\ldots,4(n-1)+m\}"\right)$.
\end{definition}

\begin{remark}\label{rmk:alternative}
{Instead of the above quantum circuit (used in Theorem~\ref{thm:Approx}), we could also employ an alternative circuit with $\nq :=2n$ qubits, creating an entangled tensor structure. More specifically, the alternative circuit maps
$\ket{0}^{\otimes 2n}$ to the state 
$\frac{1}{\sqrt{n}} \sum_{i=0}^{n-1} (\ket{1} \otimes \ket{1} )^{\otimes i} (\Ug_1^{(i+1)}\ket{0} \otimes \Ug_2^{(i+1)} \ket{0} ) \otimes (\ket{0} \otimes \ket{0} )^{\otimes n-i-1}$. Choosing $\P_m^{n}$ ($m\in \{0,1,2,3\}$) as the probability that the measured state is in 
$\{ (\ket{1} \otimes \ket{1} )^{\otimes i} \otimes (\ket{x} \otimes \ket{y} ) \otimes (\ket{0} \otimes \ket{0} )^{\otimes n-i-1} \,: \, i \in \{0,\ldots,n-1\}\}$, where $x,y \in \{0,1\}$ are the coefficients in the binary representation $m=2x+y$, the proof of Proposition~\ref{prop:circuitOutput} below carries over analogously. A detailed explicit construction for building such a circuit is left for future study. 
}
%{Fig.~\ref{fig:2} provides a schematic diagram of how the quantum neural network processes inputs in this alternative case. {Note that even though $\Ug_i$ has a tensor product structure, the composition $\Ug \Vg_i$ does not have a tensor product structure. Thereby the circuit creates $2n$ entangled qubits.}
%Contrary to the previous case and Lemma~\ref{lem:ConstrPsi}, here the circuit for~$\Ug$ is easy to build while generating $\ket{\psi}$ is more involved and a detailed explicit construction thereof is left to future study.}
\end{remark}
%\begin{figure}
%\centering
%\includegraphics[width=2.5in]{QNNCircuit.png}
%\caption{{Schematic diagram of the quantum neural network using the alternative circuit with $2n$ qubits.}}
%\label{fig:2}
%\end{figure}

%%%%%%%%%%%%%%%%%%%%%%%%%%%%%%%%%%%%%%%%%%%%%
%%%%%%%%%%%%%%%%%%%%%%%%%%%%%%%%%%%%%%%%%%%%%
\subsection{Approximation error bound}
We now consider learning based on the variational quantum circuit constructed above. 
For a continuous and integrable (in~$\Cc(\RR^d)\times L^{1}(\RR^d)$) function $f \colon \RR^d \to \R$, 
we denote by $\widehat{f}(\bx) := \int_{\RR^d} \E^{-2\pi\I \boldsymbol{y} \cdot \bx} f(\boldsymbol{y}) \D \boldsymbol{y}$, for $\bx \in \RR^d$, 
its Fourier transform{,} and define
\begin{equation}\label{eq:Lonefhat}
\LOnefhat :=\int_{\RR^d} |\widehat{f}(\bx)|\D \bx,
\end{equation}
which may or may not be finite.

Given the operator $\Cg_{\nq}(\ttheta,\xx)$
and $R>0$,
introduce the map $\ftrn:\RR^d \to\RR$ by
\begin{equation}\label{eq:qnn}
\ftrn(\cdot)
:= R-2R[\P_1^{n}(\ttheta,\cdot)
+ \P_2^{n}(\ttheta,\cdot)].
\end{equation}
{Fig.~\ref{fig:1} provides a schematic diagram of how the quantum neural network $\ftrn$ acts on inputs $\xx$ through the quantum circuit: the initial state $\ket{0}^{\otimes \nq}$ is processed through the quantum gates~$\Vg$ and $\Ug(\ttheta,\xx)$ and then measured. The input  $\xx$ determines the operator $\Ug(\ttheta,\xx)$ and the probabilities $\P_1^{n}(\ttheta,\xx)$, $\P_2^{n}(\ttheta,\xx)$. These probabilities are aggregated into the network output $\ftrn$ according to \eqref{eq:qnn}.}

\begin{figure}
\centering
\includegraphics[width=2.5in]{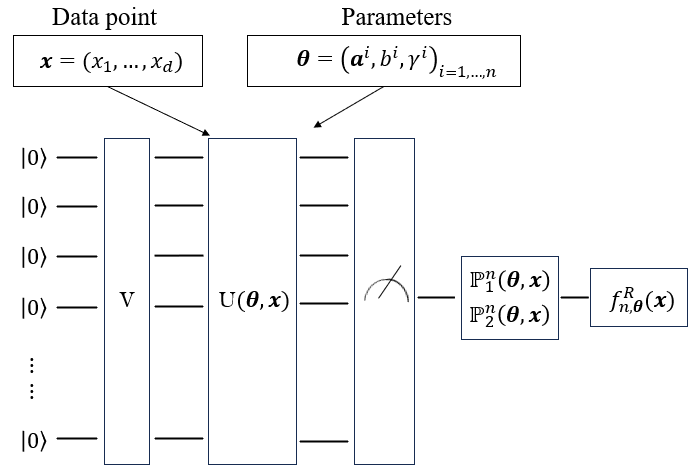}
\caption{{Schematic diagram of the quantum neural network $\ftrn$ with $\nq = \lceil \log_2(4n) \rceil $ qubits. The initial state $\ket{0}^{\otimes \nq}$ is processed through the quantum gates~$\Vg$ and~$\Ug$ and then measured; yielding probabilities $\P_1^{n}$ and $\P_2^{n}$. These probabilities are aggregated into the network output $\ftrn$ according to \eqref{eq:qnn}.}}
\label{fig:1}
\end{figure}

In order to avoid repetitions later and to compare the different results more clearly, introduce the {space $\Ff$ of continuous, integrable functions and its subspace $\Ff_R$ of functions with Fourier integral at most $R$:} 
\begin{align}
\Ff & := \Big\{f:\RR^d\to\RR: f\in \Cc\left(\RR^d\right)\cap L^{1}\left(\RR^d\right)\Big\},\label{eq:Space1}\\
\Ff_{R} & := \left\{f\in\Ff,
\text{ with }\LOnefhat \leq R\right\},
\qquad\text{for any }R>0.\label{eq:Space1R}
\end{align}

Before stating the first theorem, we fix a probability measure~$\mu$ on~$\RR^d$,
which shall be used throughout to measure the approximation error. 
This measure can be chosen arbitrarily and changes how the approximation quality is weighted in different regions of~$\RR^d$. 
For example, if~$\mu$ is the uniform measure on a hypercube $[-M,M]^d$ (for some $M>0$), then the approximation quality is weighted equally over $[-M,M]^d$ and no weight is put on the approximation quality outside $[-M,M]^d$.
The following result shows that the outputs of~$\Cg_{\nq}(\ttheta,\xx)$ can be used to approximate such  functions---further satisfying $\widehat{f} \in L^{1}(\RR^d)$---up to an error of size $n^{-\frac{1}{2}}$. 
In particular, no curse of dimensionality occurs and the number of qubits is logarithmic in~$n$.  

\begin{theorem}\label{thm:Approx}
For any $R>0$, $f \in \Ff_{R}$ and
$n\in\NN$, there exists~$\ttheta\in\TTheta$ such that 
\begin{equation}
\left(\int_{\RR^d} 
 \left|f(\xx) - \ftrn(\xx)\right|^2 \mu(\D \xx)\right)^{1/2} \leq  \frac{\LOnefhat}{\sqrt{n}}.
\end{equation}
%	with $C_f = \int_{\RR^d} |\widehat{f}(\bx)| \D\bx$. 
\end{theorem}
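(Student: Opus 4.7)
The plan is to first compute explicitly the output $\ftrn(\xx)$ of the quantum circuit as a classical function of $\xx$, and then recognise it as an empirical Monte Carlo average against the Fourier inversion formula applied to $f$; the bound will then follow from a standard variance estimate.

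\textbf{Step 1 (unpacking the circuit).} Since $\Vg \ket{0}^{\otimes \nq} = \ket{\psi} = \frac{1}{\sqrt{n}} \sum_{i=0}^{n-1} \ket{4i}$ and $\Ug$ is block-diagonal with the $i$-th block acting as $\Ug_1^{(i)} \otimes \Ug_2^{(i)}$ on the two-qubit subspace $\operatorname{span}\{\ket{4(i-1)+j} : j=0,1,2,3\}$, the state $\Ug\Vg\ket{0}^{\otimes \nq}$ is a superposition whose amplitude on $\ket{4(i-1)+j}$ is $n^{-1/2}$ times the corresponding entry of $(\Ug_1^{(i)}\otimes\Ug_2^{(i)})\ket{00}$. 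A direct computation using $\Rgz(-\alpha)\ket{0}=\E^{\I\alpha/2}\ket{0}$ and the identity $\Hg\bigl(\E^{\I\phi/2}\ket 0 + \E^{-\I\phi/2}\ket 1\bigr)/\sqrt 2 = \cos(\phi/2)\ket 0 + \I\sin(\phi/2)\ket 1$ gives
\[
\Ug_1^{(i)}\ket 0 = \cos\bigl(\tfrac{\phi_i}{2}\bigr)\ket 0 + \I\sin\bigl(\tfrac{\phi_i}{2}\bigr)\ket 1,
\qquad \phi_i := b^{i}+\ab^{i}\cdot\xx,
\]
and $\Ug_2^{(i)}\ket 0 = \cos(\gamma^i/2)\ket 0 + \sin(\gamma^i/2)\ket 1$. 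Summing the two relevant squared moduli and using the half-angle identities, one obtains
\[
\P_1^{n}(\ttheta,\xx)+\P_2^{n}(\ttheta,\xx)
= \frac{1}{n}\sum_{i=1}^{n} \frac{1-\cos(\phi_i)\cos(\gamma^i)}{2}
= \frac{1}{2} - \frac{1}{2n}\sum_{i=1}^{n}\cos(\gamma^i)\cos\bigl(b^{i}+\ab^{i}\cdot\xx\bigr),
\]
so that
\[
\ftrn(\xx) = \frac{1}{n}\sum_{i=1}^{n} R\cos(\gamma^i)\cos\bigl(b^{i}+\ab^{i}\cdot\xx\bigr).
\]

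\textbf{Step 2 (Fourier representation as an expectation).} Since $f\in L^{1}(\RR^d)$ is real-valued and $\widehat f\in L^{1}(\RR^d)$, Fourier inversion gives, writing $\widehat f(\bx)=|\widehat f(\bx)|\E^{\I\vartheta(\bx)}$,
\[
f(\xx) = \int_{\RR^d}|\widehat f(\bx)|\cos\bigl(\vartheta(\bx)+2\pi\bx\cdot\xx\bigr)\D\bx
= \LOnefhat\,\EE_{\bx\sim p}\bigl[\cos\bigl(\vartheta(\bx)+2\pi\bx\cdot\xx\bigr)\bigr],
\]
where $p(\bx):=|\widehat f(\bx)|/\LOnefhat$ is a probability density on $\RR^d$.

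\textbf{Step 3 (probabilistic selection of the weights).} Draw $\bx_1,\ldots,\bx_n$ i.i.d.\ from $p$ and set $\ab^{i}:=2\pi\bx_i$, $b^{i}:=\vartheta(\bx_i)$ and $\gamma^{i}:=\arccos(\LOnefhat/R)\in[0,\pi/2]\subset[0,2\pi]$, which is well defined because $\LOnefhat\le R$. With these choices each term in the sum of Step 1 equals $\LOnefhat\cos(\vartheta(\bx_i)+2\pi\bx_i\cdot\xx)$, which is an unbiased estimator of $f(\xx)$. Taking expectation, using Fubini and the elementary bound $\mathrm{Var}(X)\le \EE[X^{2}]\le \LOnefhat^{2}$ for the bounded summand,
\[
\EE\!\left[\int_{\RR^{d}}\bigl|f(\xx)-\ftrn(\xx)\bigr|^{2}\mu(\D\xx)\right]
= \int_{\RR^{d}}\frac{1}{n}\mathrm{Var}\!\bigl(\LOnefhat\cos(\vartheta(\bx_{1})+2\pi\bx_{1}\cdot\xx)\bigr)\mu(\D\xx)
\le \frac{\LOnefhat^{2}}{n}.
\]
Hence at least one realisation $(\bx_1,\ldots,\bx_n)$, and thus a deterministic $\ttheta\in\TTheta$, achieves $\int|f-\ftrn|^{2}\D\mu\le \LOnefhat^{2}/n$, which gives the claim after taking square roots.

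\textbf{Main obstacle.} The only nontrivial part is the bookkeeping in Step 1: correctly identifying which measurement outcomes fall in the index set $\{1+4i,2+4i\}$, and turning four trigonometric squares into the clean product $\cos\phi_i\cos\gamma^i$. Once the circuit output is shown to be of the Monte Carlo form $n^{-1}\sum_{i}R\cos(\gamma^i)\cos(b^{i}+\ab^{i}\cdot\xx)$, the remainder is a textbook Barron-type random-feature argument and the constraint $\LOnefhat\le R$ is exactly what is needed to absorb the amplitude into the admissible range $[-R,R]$ of $R\cos(\gamma^{i})$.
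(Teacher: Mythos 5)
Your proposal is correct and follows the same two-step architecture as the paper: first identify the circuit output $\ftrn$ with the trigonometric sum $\frac{1}{n}\sum_i R\cos(\gamma^i)\cos(b^i+\ab^i\cdot\xx)$ (your Step 1 reproduces Proposition~\ref{prop:circuitOutput} essentially verbatim), then run a Barron/Maurey-type random-feature existence argument (the paper's Proposition~\ref{prop:FourierApprox}). The one genuine difference is in how you randomise the Fourier representation. The paper splits $\widehat f$ into real and imaginary parts, samples the frequency from a Bernoulli mixture of the two normalised densities $|\mathrm{Re}\,\widehat f|$ and $|\mathrm{Im}\,\widehat f|$, and encodes the sign of the sampled part into $W_i=\pm\LOnefhat$, hence into $\gamma^i=\arccos(W_i/R)$. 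You instead write $\widehat f(\bx)=|\widehat f(\bx)|\E^{\I\vartheta(\bx)}$, sample from the single density $|\widehat f|/\LOnefhat$, push the entire phase into the offset $b^i=\vartheta(\bx_i)$ (admissible since $b^i$ ranges over all of $\RR$), and take one common $\gamma^i=\arccos(\LOnefhat/R)$. Your polar-form variant is arguably cleaner: it avoids the bookkeeping of two measures and, in particular, sidesteps the normalisation identity $\int|\mathrm{Re}\,\widehat f|+\int|\mathrm{Im}\,\widehat f|=\LOnefhat$ that the paper's mixture construction implicitly relies on (and which holds with equality only when $\mathrm{Re}\,\widehat f\cdot\mathrm{Im}\,\widehat f=0$ a.e.), while yielding the identical constant $\LOnefhat/\sqrt n$. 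The only points to tidy are routine: choose a measurable branch of $\vartheta$, and dispose of the degenerate case $\LOnefhat=0$ separately.
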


\begin{remark}
We would like to emphasise that the quantum circuit above depends on~$\ttheta$ (a quantum analogue of the vector of hyperparameters in classical neural networks) and the input data~$\xx$ {as, for example,} in~\cite{perez2021one, Schuld2021}.
Theorem~\ref{thm:Approx} states that the optimal vector~$\ttheta$ does not depend on~$\xx$.
{For example, }
%suppose the} measure~$\mu$ is a sum of Dirac masses
%$
%\mu(\D\xx) = \sum_{i=1}^{M}\mu_{i}\delta_{\xx^{i}}.
%$ {In this case} the expression in Theorem~\ref{thm:Approx} {is given by}
%$$
%\int_{\RR^d} 
% \left|f(\xx) - \ftrn(\xx)\right|^2 \mu(\D \xx)
% = \sum_{i=1}^{M}\mu_i
% \left|f(\xx^i) - \ftrn(\xx^i)\right|^2,
%$$
%implying that the quantum neural network can approximate the function~$f$ over any compact hypercube 
%$[\min\{\xx^i_1\}_{i=1,\ldots,M}, \max\{\xx^i_1\}_{i=1,\ldots,M}]
%\times\cdots\times[\min\{\xx^i_d\}_{i=1,\ldots,M}, \max\{\xx^i_d\}_{i=1,\ldots,M}]$
%with a single vector of parameters~$\ttheta$. More generally,
{fix a weighting function $w \colon \R^d \to [0,\infty)$ that models the weight $w(\xx)$ we assign to the error at each point $\xx \in \R^d$.}
Provided that $w \in L^1(\R^d)$, we can use it to define a probability measure~$\mu$ with Lebesgue-density $w/\|w\|_1$. In this case the error {in Theorem~\ref{thm:Approx} is given by}
$$
\begin{aligned}
\int_{\RR^d} &  
\left|f(\xx) - \ftrn(\xx)\right|^2 \mu(\D \xx)
\\ & = \frac{1}{\|w\|_1} \int_{\RR^d} w(\xx)
\left|f(\xx) - \ftrn(\xx)\right|^2 \D \xx.
\end{aligned}
$$
{Hence,} the quantum neural network is able to approximate the function $f$ on all of~$\R^d$ in an average sense with a single vector of parameters~$\ttheta$. 
Handling or adding constraints to take care of potential erratic behaviour outside the (necessarily) compact set of training data points has recently been tackled for classical neural networks~\cite{balestriero2021learning,siegel2022parallel,xu2020neural}{,} and we leave it to future research in the quantum case.
\end{remark}

{
\begin{remark}
To put the obtained bounds in context{,} let us compare them to bounds available for the related method of quantum feature maps.
In classical machine learning, random feature methods  are closely linked to kernel methods, as in~\cite{RahimiRecht2008a, RahimiRecht2008} and in the more recent works~\cite{MMM21, RudiRosasco2017}. Similarly, quantum reservoir networks are related to quantum feature maps as studied{,} for example{,} in~\cite{Goto2021}, {proving universal approximation results} for quantum feature maps. A collection of basis functions is built by applying multiple observables to a quantum circuit. Linear combinations of these basis functions are then shown to be universal approximators. The proof is based on polynomial approximations, which also yields error bounds for these quantum feature maps. These results are valid for all Lipschitz functions and require $\mathcal{O}(\varepsilon^{-1})$ qubits and $\mathcal{O}(\varepsilon^{-d})$ measurement basis functions 
(where~$d$ is the data dimension) to achieve {approximation error} $\varepsilon>0$. In particular, for small $\varepsilon$ the number of required observables grows exponentially in the dimension $d$, i.e., the \textit{curse of dimensionality} occurs. In contrast, here we consider a single quantum circuit,  linearly combine probabilities obtained from a single measurement of this circuit and the approximation error rate does not suffer from the curse of dimensionality.
\end{remark}
}

For standard neural networks, the condition $\int_{\RR^d} \|\bx\| |\widehat{f}(\bx)|\D \bx < \infty$ guarantees that~$f$ can be approximated by neural networks without the curse of dimensionality, as established in~\cite{Barron1993}. 
For $f \in L^1(\RR^d)$, this condition is stronger than the requirement $\LOnefhat < \infty$ imposed in Theorem~\ref{thm:Approx}. 

A sufficient condition for $\LOnefhat < \infty$ is the requirement that $f \in \Hh^{s}(\RR^d)$ for $s>\frac{d}{2}$, 
where~$\Hh^{s}$ is the Sobolev space of order~$s$, see \cite[Lemma~6.5]{Folland1995} applied with $k=0$ and its proof. 
In dimension $d=1$, for example, a sufficient condition to apply Theorem~\ref{thm:Approx} is thus $f \in L^1(\R)\cap L^2(\R) \cap \Cc(\R)$ ($=L^1(\R) \cap \Cc(\R) $), $f' \in L^2(\R)$.

The proof of Theorem~\ref{thm:Approx} is constructive and consists of two steps. 
First, in Proposition~\ref{prop:circuitOutput}, 
we show that for any choice of weights $\ttheta=(\ab^{i},b^{i},\gamma^{i})_{i=1,\ldots,n}$ the function 
\begin{equation}
\begin{aligned}
	\label{eq:fct}
{\gtrn(\xx) = \frac{1}{n}\sum_{i=1}^{n} R\cos\left(\gamma^{i}\right) \cos\left( b^{i}+\ab^{i} \cdot \xx \right),}
\end{aligned}
\end{equation}
can be realised as the output of~$\Cg_{\nq}(\ttheta,\xx)$, namely that it is equal to~$\ftrn$. 
Then, in Proposition~\ref{prop:FourierApprox}, 
we use a probabilistic argument to show that functions of type~\eqref{eq:fct} are able to approximate continuous, integrable functions $f$ with integrable Fourier transform up to an error of size $\LOnefhat n^{-\frac{1}{2}}$,
thus yielding Theorem~\ref{thm:Approx}. {We refer to Section~\ref{subsec:proof1} for details.}

\subsection{Universal approximation by variational quantum circuits}

As a corollary of Theorem~\ref{thm:Approx} we obtain the following universal approximation result:

\begin{corollary} \label{cor:universality}
Let $\mu$ be a probability measure on $\RR^d$ and let $f \in L^2(\RR^d,\mu)$. 
Then for any $\varepsilon >0$ there exist $n \in \N$, $R>0$ and $\ttheta\in\TTheta$ such that 
$\Cg_{\nq}(\ttheta,\xx)$ outputs  $f_{\ttheta}(\xx)$ with 
\begin{equation}\label{eq:universality}
\left(\int_{\RR^d} |f(\xx)-f_{\ttheta}(\xx)|^2 \mu(\D\xx)\right)^{1/2} \leq  \varepsilon.
\end{equation}
\end{corollary}
{The proof of this result is provided in Section~\ref{subsec:proof2}. Corollary~\ref{cor:universality} shows  that our quantum circuit is universal in a mean-square sense, namely that it can approximate arbitrarily well any (not necessarily continuous) function~$f$ which is square integrable with respect to a probability measure~$\mu$. 
A key feature of this universality is that when~$\mu$ is chosen with full support (for example when it admits a strictly positive density), 
then the quantum neural network approximation holds on all of~$\R^d$. This makes the universality result particularly useful when dealing with stochastic inputs, 
as is common in applications to time series and finance. 
}
{\subsection{Comparison to classical neural networks}}
{In this section we provide a more detailed comparison of quantum neural networks to their classical counterparts.  Corollary~\ref{cor:universality}  establishes a mean-square universal approximation result for quantum neural networks. For classical neural networks such a result was established in \cite{hornik1991}. This indicates that, qualitatively, the two learning systems possess the same approximation capabilities. Moreover, Theorem~\ref{thm:Approx} proves that quantum neural networks are able to achieve an approximation error decaying as $n^{-1/2}$ for a larger class of $f \in L^1(\RR^d)$ than classical neural networks. Indeed, for classical neural networks the seminal work by Barron~\cite{Barron1992,Barron1993,Barron1994ApproximationAE} proved such a rate for functions $f$ satisfying $\int_{\RR^d} \|\bx\| |\widehat{f}(\bx)|\D \bx < \infty$.
For $f \in L^1(\RR^d)$, this condition is stronger than the requirement $\LOnefhat < \infty$ imposed in Theorem~\ref{thm:Approx}. 	
Thus, this comparison indicates that quantum neural networks possess a higher expressive power than classical neural networks.  In terms of learning efficiency, a general comparative advantage has yet to be established in the literature and may be highly dependent on the considered data, as demonstrated in \cite{PowerOfData}.
}

{Classical neural networks are typically trained using stochastic gradient descent-type algorithms, with loss gradients computed via backpropagation. For quantum neural networks,  analogously to classical neural networks, one may define a loss function, say, in the context of Corollary~\ref{cor:universality}, 
$$
\mathcal{E}(\theta):= \int_{\RR^d} |f(\xx)-f_{\ttheta}(\xx)|^2 \mu(\D\xx),
$$
and solve (numerically) the minimisation problem
$\min_{\theta}\mathcal{E}(\theta)$.\footnote{{Although the construction of the optimal parameter~$\ttheta$ is explicit in the proof of Theorem~\ref{thm:Approx}, it however depends on the (typically unknown) function~$f$ itself. 
It may thus not be possible to construct without knowing~$f$.}}
This falls into the scope of \emph{hybrid quantum-classical} systems, whereby the quantum circuit generates possible outputs~$f_{\theta}(x)$, but the minimisation algorithm runs classically, for example with gradient descent.
Classical neural network training may be sensitive to appropriate hyperparameter selection; allowing to avoid local minima of the loss surface. Quantum neural network training may exhibit challenges in the stability of the learning process due to Barren plateaus \cite{Barren}, which may lead to vanishing gradient problems for large number of qubits. }
The quantum \emph{parameter-shift rule} proposed in~\cite{mitarai2018quantum} is an interesting alternative that allows one to compute gradients purely with quantum circuits (i.e. without requiring a hybrid mode). 
However, its computational efficiency is not guaranteed as it requires additional quantum circuits, 
and furthermore {so far it has only been established for}
%it only works (so far as the proof goes) for functions of the form
$f(\tilde{\theta}) := \bra{\psi(\tilde{\theta})}\mathcal{M}\ket{\psi(\tilde{\theta})}$ for some parameterised quantum state $\ket{\psi(\tilde{\theta})}$ and some observable~$\mathcal{M}$. {This is not our setup here, 
and we leave it to future work to possibly rewrite our result for such parameterisations.}

%%%%%%%%%%%%%%%%%%%%%%%%%%%%%%%%%%%%%%%
%%%%%%%%%%%%%%%%%%%%%%%%%%%%%%%%%%%%%%%
\section{Variational quantum circuits with randomly generated weights}\label{sec:ReservoirQ}

We now construct a quantum circuit in the spirit of reservoir computing: the parameters inside the circuit are randomly generated and only a final post-measurement layer of weights is trainable. 
Training such a circuit thus only requires to solve a linear regression problem, as opposed to generic quantum circuits, typically trained with gradient-based methods. 

{The approximation bounds we obtain for these quantum reservoirs are akin to the approximation bounds available for classical random feature neural networks in~\cite{Gonon2021, RC12}. 
As in these, we consider here functions whose Fourier transforms satisfy certain integrability conditions. We prove that these functions can be approximated by quantum reservoir networks up to an approximation error of order $n^{-\frac{1}{2}}$, where~$n$ is a parameter we may choose, proportional to the number of randomly generated weights. 
Put differently, our result shows that an approximation {error} $\varepsilon>0$ can be achieved by using $\mathcal{O}(\varepsilon^{-2})$ randomly generated weights and a circuit with $\mathcal{O} (\lceil \log_2(\varepsilon^{-1}) \rceil)$ qubits. 
In particular, these results provide a bound on the number of qubits and the size of the quantum circuit that is required in order to guarantee a prescribed approximation accuracy. Furthermore, we also obtain a universal approximation result, proving that any square-integrable functions can be approximated arbitrarily well by the constructed quantum random circuits. 
%These machine learning methods have been successfully used in many applications, see, e.g., the review papers~\cite{Ghoshetal2021, Mujaletal2021}. 
Our results provide mathematical foundation for the use of quantum extreme learning machines in practice.}
{For convenience, key notations for this section are collected in Table~\ref{tab:symbols2}.}
\vspace{-2mm}
\begin{table}[h!]
	\centering
	\renewcommand{\arraystretch}{1.3}
	\caption{Key notation for Section~\ref{sec:ReservoirQ}}
	\label{tab:symbols2}
	{\color{black}
		\begin{tabular}{||c c c||} 
			\hline
			{	Symbol} & {Meaning} & {Definition} \\
			\hline
			$\pi_a$ & density of random circuit parameters & Section~\ref{subsec:reservoirCircuit} \\ 
			$\xx$ & inputs & Section~\ref{subsec:reservoirCircuit} \\ 
			$n$ & accuracy parameter & Section~\ref{subsec:reservoirCircuit} \\
			$\nqb$ & number of qubits & Section~\ref{subsec:reservoirCircuit} \\
			$\overline{\Cg}_{\nqb}(\xx)$ & random quantum circuit & Section~\ref{subsec:reservoirCircuit}  \\ 
			$\PPb_k(\xx)$ & measured probabilities  & Section~\ref{subsec:reservoirCircuit} 
			\\ $F_{\ww}$ & quantum random neural network & \eqref{eq:randomF}
			\\ 
			$\Lbarf$ & $\pi_a$-weighted Fourier integral &\eqref{eq:Lbarf}  \\
			$\Ffov$ & space of functions``compatible'' with $\pi_a$ & \eqref{eq:Space1Bar}  \\ [1ex] 
			\hline
		\end{tabular}
	}
\end{table}

\subsection{{Construction of} a random universal quantum circuit}
\label{subsec:reservoirCircuit}

Let $n \in \N$,
$\Bf=(B^{i})_{i=1,\ldots,n}$ i.i.d.\ with $\half$-Bernoulli distribution,
$\Af=(\Af^{i})_{i=1,\ldots,n}$ i.i.d.\ with density~$\pi_a$, and~$\Af$ and~$\Bf$ independent,
and $\Af^i = (A^i_j)_{1\leq j \leq d}$ for each $i=1,\ldots, n$.
%We now build a quantum circuit using these random variables. 
For an input $\xx \in \RR^d$ we consider,
%the gate $\Ug^{(i)}_1$ with parameters $(2\pi A^{(i)},\frac{\pi}{2} B^{(i)})$, i.e., 
for $i=1,\ldots, n$, $\overline{\Ug}^{(i)}_1:=\overline{\Ug}^{(i)}_1(\xx)
:= \Ug^{(i)}_1\left(2\pi \mathbf{A}^{i},\frac{\pi}{2} B^{i},\xx\right)$, that is, 
$$
\overline{\Ug}^{(i)}_1(\xx)
 = \Hg \, \Rgz\left(-B^{i}\frac{\pi}{2}\right) \Rgz\left(-2\pi {A}^{i}_d x_d\right) \cdots 
 \Rgz\left(-2 \pi {A}^{i}_1 x_1\right) \Hg.
$$
Similarly to Section~\ref{subsec:circuit} we now use these gates to build the (random) block matrix $\overline{\Ug} := \overline{\Ug}(\xx)$, 
\[
\overline{\Ug} :=
\begin{bmatrix}
	\overline{\Ug}^{(1)}_1 & \mathbf{0}_{2 \times 2}  & \mathbf{0}_{2 \times 2} & \cdots &  \mathbf{0}_{2 \times 2} & \mathbf{0}_{2 \times \overline{n}_0} \\
	\mathbf{0}_{2 \times 2} & \overline{\Ug}^{(2)}_1 & \mathbf{0}_{2 \times 2} & \cdots &  \mathbf{0}_{2 \times 2} & \vdots  \\
	\vdots &  & \ddots &  & \vdots & \vdots \\
	\mathbf{0}_{2 \times 2} & \cdots &  \mathbf{0}_{2 \times 2} &  \overline{\Ug}^{(n-1)}_1 & \mathbf{0}_{2 \times 2} & \vdots 
	\\
	\mathbf{0}_{2 \times 2} & \cdots &  \cdots & \mathbf{0}_{2 \times 2} & \overline{\Ug}^{(n)}_1 & \mathbf{0}_{n \times \overline{n}_0}
	\\
	{\bf 0}_{\overline{n}_0 \times 2} & \cdots & \cdots & \cdots & {\bf 0}_{\overline{n}_0 \times 2} & {\bf 1}_{\overline{n}_0 \times \overline{n}_0}
\end{bmatrix},
\]
with $\overline{n}_0 \in \N_0$ the smallest positive integer such that $\log_2(2n+\overline{n}_0) \in \N$.  
Then $\overline{\Ug} \in \C^{(2n+\overline{n}_0) \times (2n+\overline{n}_0)}$ is a (random) unitary matrix and can be viewed as a gate operating on $\nqb := \lceil \log_2(2n) \rceil $ qubits. Let $\overline{N} := 2n+\overline{n}_0=2^{\nqb}$ and similarly to Section~\ref{subsec:circuit} let $\overline{\Vg} \in \C^{\overline{N} \times \overline{N}}$ be any unitary matrix that maps 
$\ket{0}^{\otimes \nqb}$ to the state $\ket{\overline{\psi}} = \frac{1}{\sqrt{n}} \sum_{i=0}^{n-1} \ket{2i}$. 

Consider now the circuit {processing the initial state $\ket{0}^{\otimes \nqb}$ through the quantum gates $\overline{\Vg}$ and $\overline{\Ug}(\xx)$ and then measuring the state. We call this circuit $\overline{\Cg}_{\nqb}(\xx)$. Figure~\ref{fig:abstr} (with~$\Vg$ and~$\Ug$ replaced by $\overline{\Vg}$ and $\overline{\Ug}$) provides an abstract depiction of the circuit. }
%\begin{center}
%	\Qcircuit{
%		&\lstick{\ket{0}^{\otimes \nqb}} & \gate{\overline{\Vg}} & \gate{\overline{\Ug}(\xx)} & \meter 
%	}
%\end{center}
The possible measurement outcomes are $0,\ldots,\overline{N}-1$,
and we denote $\PPb_k := \PPb_k(\xx)$ the probability that the measured state is equal to 
$k \in \{0,\ldots,\overline{N}-1\}$,
 which can be estimated by running the circuit, 
 as explained in Section~\ref{subsec:circuit}.
In contrast to Section~\ref{sec:variational}, here no parameter is trained/adjusted within the quantum circuit. 
The matrix parameters~$\Af$ and~$\Bf$ are randomly generated and then fixed, 
and  the subsequent inputs~$\xx$ are mapped through the fixed circuit to output probabilities 
$\PPb_k = \PPb_k(\xx) = \PPb_k^{\Af,\Bf}(\xx)$, for $k=0,\ldots,\overline{N}-1$. 

{
\begin{remark}
{Instead of the quantum circuit introduced here and used in Theorem~\ref{thm:random}  below, one could alternatively use a circuit with $\nqb :=n$ qubits creating an entangled tensor structure. More specifically, the circuit could be chosen to map
$\ket{0}^{\otimes n}$ to the state 
$\frac{1}{\sqrt{n}} \sum_{i=0}^{n-1} \ket{1}^{\otimes i} \otimes \overline{\Ug}^{(i+1)}_1 \ket{0} \otimes  \ket{0}^{\otimes n-i-1}$. 
Now we choose for $j \in \{0,\ldots,n-1\}$ the probability $\PPb_{2j} := \PPb_{2j}(\xx)$ in \eqref{eq:randomF} below as the probability that the measured state is 
$\ket{1}^{\otimes j} \otimes \ket{0}^{\otimes n-j}$. Then,
the proof of Proposition~\ref{prop:circuitOutputRandom} below carries over analogously also for this modified circuit. }
\end{remark}
}
%\begin{figure}
%\centering
%\includegraphics[width=2.5in]{QELMCircuit.png}
%\caption{{Schematic diagram of the random quantum neural network using the alternative circuit with $n$ qubits.}}
%\label{fig:3}
%\end{figure}

\begin{remark}
In quantum reservoir computing there is usually a dynamical aspect, whereby the inputs that need to be processed are sequences. Here we work in a static setting, but still use the term \textit{reservoir quantum circuit} to emphasise that the parameters within the variational quantum circuits are randomly generated and then fixed. Alternatively, we also use the terminology \textit{quantum extreme learning machine} or \textit{quantum random feature network} in analogy to the classical terminology in the static case. 
\end{remark}

\begin{remark}
In the notation {of Section~\ref{sec:variational}}, we could consider a circuit in which the weights $(b^{i},\ab^{i})_{i=1,\ldots,n}$ are randomly sampled and then fixed, and only the remaining weights $(\gamma^{i})_{i=1,\ldots,n}$ are trained. 
Here we consider an even simpler circuit which does not depend on $\gamma^{1},\ldots,\gamma^{(n)}$, but instead on some trainable ``readout weights'' appearing only after measurement, {see \eqref{eq:randomF} below.}
\end{remark}

\subsection{{Approximation error bounds}}

Mimicking the previous section, we introduce {a subspace of  $\Ff$ that consists of functions with Fourier transform ``compatible'' with $\pi_a$: }
\begin{equation}\label{eq:Space1Bar}
\begin{aligned}
\Ffov := & \Big\{f:\RR^d\to\RR: f \in \Cc(\RR^d) \cap L^1(\RR^d), \widehat{f}\in L^1(\RR^d), \\ & \quad |\widehat{f}|\ll \pi_a, 
\int_{\RR^d} \frac{|\widehat{f}(\bx)|^2}{\pi_a(\bx)} \D\bx<\infty\Big\}.
\end{aligned}
\end{equation}
{Here} we write $\nu\ll \mu$ if $\mu(A) = 0$ implies $\nu(A) = 0$ ({i.e., the} measure~$\nu$ is  absolutely continuous with respect to the measure~$\mu$). {Moreover,} for a function $f\in\Ffov$, we denote
\begin{equation}\label{eq:Lbarf}
\Lbarf := \left(2\int_{\RR^d} \frac{|\widehat{f}(\bx)|^2}{\pi_a(\bx)} \D\bx\right)^{1/2}.
\end{equation}

Given the {probabilities $\PPb_k(\xx)${,} measured after running the circuit defined by the operator $\overline{\Cg}_{\nqb}(\xx)$}
and $\ww\in\RR^n$, introduce the map $F_{\ww}:\RR^d \to\RR$ by
\begin{equation}\label{eq:randomF}
F_{\ww}(\xx) := \sum_{j=0}^{n-1} w_j \left(2\PPb_{2j}(\xx)-\frac{1}{n}\right).
\end{equation}
{Fig.~\ref{fig:4} provides a schematic diagram of how the quantum random neural network $F_{\ww}$ acts on inputs $\xx$ through the quantum circuit: the initial state $\ket{0}^{\otimes \nqb}$ is processed through the quantum gates~$\overline{\Vg}$ and~$\overline{\Ug}(\xx)$ and then measured. The input  $\xx$ determines the operator $\overline{\Ug}(\xx)$ and the probabilities $\PPb_{0}(\xx), \ldots,\PPb_{2(n-1)}(\xx)$. These probabilities are aggregated into the network output~$F_{\ww}$ according to~\eqref{eq:randomF}}.
\begin{figure}
\centering
\includegraphics[width=2.8in]{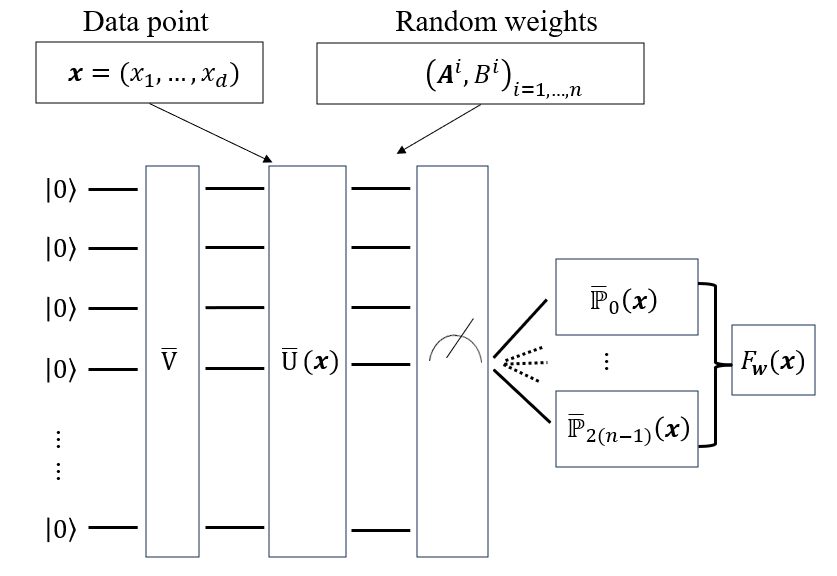}
\caption{{Schematic diagram of the quantum random neural network $F_{\ww}$ with $\nqb = \lceil \log_2(2n) \rceil $ qubits. The initial state $\ket{0}^{\otimes \nqb}$ is processed through the quantum gates~$\overline{\Vg}$ and~$\overline{\Ug}$ and then measured, yielding probabilities $\PPb_0, \ldots, \PPb_{2(n-1)}$. These probabilities are aggregated into the network output~$F_{\ww}$ according to~\eqref{eq:randomF}.}}
\label{fig:4}
\end{figure}
The following theorem provides an approximation error bound for such a circuit, which can be viewed as a quantum analogue to the bounds obtained for classical random feature neural networks in \cite{Gonon2021, RC12}.

\begin{theorem}\label{thm:random}
For any $n \in \N$, $f\in\Ffov$,
 there exists an $\RR^n$-valued $\sigma(\Af,\Bf)$-measurable random vector~$\Wf$ such that 
$$
\EE\left[\int_{\RR^d} \left|F_{\Wf}(\xx)-f(\xx)\right|^2 \mu(\D\xx)\right]^{1/2} \leq \frac{\Lbarf}{\sqrt{n}}.
$$
\end{theorem}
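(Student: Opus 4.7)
The plan mirrors the strategy of Theorem~\ref{thm:Approx}: first identify the explicit trigonometric form of the circuit output $F_{\ww}$, then construct randomised weights via importance sampling against the Fourier inversion formula, and finally bound the resulting Monte Carlo error.

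The first step is an analogue of Proposition~\ref{prop:circuitOutput} for the simpler circuit $\overline{\Cg}_{\nqb}$. Since $\overline{\Vg}\ket{0}^{\otimes\nqb}=\frac{1}{\sqrt{n}}\sum_{j=0}^{n-1}\ket{2j}$ and $\overline{\Ug}(\xx)$ is block-diagonal with $2\times 2$ blocks $\overline{\Ug}^{(j+1)}_1(\xx)$, the standard computation from the proof of Proposition~\ref{prop:circuitOutput} (without the $\Ug_2^{(i)}$ factors) gives $[\overline{\Ug}^{(i)}_1(\xx)]_{1,1}=\cos\bigl(\tfrac{\pi}{4}B^{i}+\pi\Af^{i}\cdot\xx\bigr)$. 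Hence $\overline{\P}_{2j}(\xx)=\frac{1}{n}\cos^2\bigl(\tfrac{\pi}{4}B^{j+1}+\pi\Af^{j+1}\cdot\xx\bigr)$, and using $\cos^2(y)=\tfrac{1+\cos(2y)}{2}$ yields the random feature form
\begin{equation*}
F_{\ww}(\xx)=\frac{1}{n}\sum_{i=1}^{n}w_{i-1}\cos\!\Bigl(\tfrac{\pi}{2}B^{i}+2\pi\Af^{i}\cdot\xx\Bigr).
\end{equation*}

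For the second step, I rewrite Fourier inversion as in~\eqref{eq:auxEq2} and exploit the identities $\cos(\tfrac{\pi}{2}B+2\pi\ab\cdot\xx)=\cos(2\pi\ab\cdot\xx)$ when $B=0$ and $=-\sin(2\pi\ab\cdot\xx)$ when $B=1$. This suggests defining, on the event $\{\pi_a(\Af^i)>0\}$ and using the absolute continuity $|\widehat{f}|\ll\pi_a$ (with the quotient set to zero otherwise),
\begin{equation*}
W_{i-1}:=\frac{2\,\mathrm{Re}[\widehat{f}(\Af^{i})]}{\pi_a(\Af^{i})}\,\mathbbm{1}_{\{B^{i}=0\}}+\frac{2\,\mathrm{Im}[\widehat{f}(\Af^{i})]}{\pi_a(\Af^{i})}\,\mathbbm{1}_{\{B^{i}=1\}},
\end{equation*}
which is $\sigma(\Af,\Bf)$-measurable. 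Conditioning on $B^{i}$ and integrating against $\pi_a$ then recovers precisely the two terms in the Fourier inversion expansion of $f(\xx)$, so that $\EE[W_{i-1}\cos(\tfrac{\pi}{2}B^{i}+2\pi\Af^{i}\cdot\xx)]=f(\xx)$ for each $\xx$, and consequently $\EE[F_{\Wf}(\xx)]=f(\xx)$.

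The third step is a variance computation identical in spirit to~\eqref{eq:L2estimate}. By independence of the pairs $(\Af^{i},B^{i})$ and boundedness of the cosine, $\VV[F_{\Wf}(\xx)]\leq\frac{1}{n}\EE[W_{0}^{2}]$, and splitting on $B^{1}\in\{0,1\}$ gives
\begin{equation*}
\EE[W_{0}^{2}]=\frac{1}{2}\int_{\RR^{d}}\frac{4\,\mathrm{Re}[\widehat{f}(\bx)]^{2}}{\pi_a(\bx)}\D\bx+\frac{1}{2}\int_{\RR^{d}}\frac{4\,\mathrm{Im}[\widehat{f}(\bx)]^{2}}{\pi_a(\bx)}\D\bx=2\int_{\RR^{d}}\frac{|\widehat{f}(\bx)|^{2}}{\pi_a(\bx)}\D\bx=\Lbarf^{2}.
\end{equation*}
An application of Fubini's theorem then yields $\EE[\int_{\RR^d}|F_{\Wf}(\xx)-f(\xx)|^2\mu(\D\xx)]=\int_{\RR^d}\VV[F_{\Wf}(\xx)]\mu(\D\xx)\leq \Lbarf^{2}/n$, and taking square roots gives the claim. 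The only real subtlety is ensuring $W_{i-1}$ is well-defined and square-integrable, which is exactly why the space $\Ffov$ was defined with both $|\widehat{f}|\ll\pi_a$ and $\int|\widehat{f}|^{2}/\pi_a<\infty$; beyond this, the argument is essentially a quantum rephrasing of the classical random feature bound.
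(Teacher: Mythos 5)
Your proposal is correct and follows essentially the same route as the paper: the paper likewise splits the argument into a circuit-output identity (Proposition~\ref{prop:circuitOutputRandom}, giving $F_{\ww}(\xx)=\frac{1}{n}\sum_i w_{i-1}\cos(\frac{\pi}{2}B^i+2\pi\Af^i\cdot\xx)$) and an importance-sampling/variance bound (Proposition~\ref{prop:FourierApproxRandom}) with exactly your choice of $\Wf$, written there as $W_i=\frac{2}{\pi_a(\Af^i)}\{(1-B^i)\mathrm{Re}[\widehat f](\Af^i)+B^i\mathrm{Im}[\widehat f](\Af^i)\}$, which coincides with your indicator form since $B^i\in\{0,1\}$. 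All computations, including $\EE[W_1^2]=\Lbarf^2$ and the Fubini step, match the paper's.
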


The proof of Theorem~\ref{thm:random} is quite similar to that of Theorem~\ref{thm:Approx}, it follows by combining Propositions~\ref{prop:FourierApproxRandom} and~\ref{prop:circuitOutputRandom} below. {We refer to Section~\ref{subsec:proof3} below for details.}

\begin{example}
As an example, consider $d=1$ and let~$\pi_a$ be the density of a~$t_1$-distribution. 
Then the condition 
$\Lbarf < \infty$ translates to $\int_{\RR} (1+|\bx|^2)|\widehat{f}(\bx)|^2 \D\bx < \infty$, namely $f \in \Hh^1(\R)$. 
Thus{,} $f$ satisfies the conditions of Proposition~\ref{prop:FourierApproxRandom} if it has a weak derivative~$f'$ and if both~$f$ and~$f'$ are square-integrable. 
	
More generally, for $d \in \N$ and~$\pi_a$ the density of a $t_{\nu}(0,\mathbbm{1}_d)$-distribution (with $\nu>0$),
the condition $\Lbarf < \infty$ is equivalent to 
$\int_{\RR^d} |\widehat{f}(\bx)|^2(1+\|\bx\|^2)^{(\nu+d)/2} \D\bx < \infty$, 
which means that~$f$ needs to belong to the Sobolev space $\Hh^{\frac{\nu+d}{2}}(\RR^d)$.
\end{example}

The next result shows that in the case where $f \in \Hh^{s}(\RR^d)$ for some $s>\frac{d}{2}$ we may in fact generate~$\mathbf{A}^{i}$ from a distribution arbitrary close to any given  density, at the expense of a possibly large constant in the error bound. {The proof is given in Section~\ref{subsec:proof4}.}
\begin{corollary} \label{cor:2}
Fix $s>\frac{d}{2}$.
Let $\varphi$ be an arbitrary probability density, 
$\nu = 2(s-\frac{d}{2})$ and~$t_\nu$ the density of a  $t_{\nu}(0,\mathbbm{1}_d)$-distribution. 
For $\delta \in (0,1)$, set $\pi_a := \delta t_\nu + (1-\delta) \varphi$. 
Let $n \in \N$ and let $F_w$  be the function realised by the quantum circuit in Theorem~\ref{thm:random} with weights~$\Af^i$ distributed according to~$\pi_a$ and $B^{i}$ distributed according to a $\half$-Bernoulli distribution. 
	Then for any $f \in \Hh^{s}(\RR^d) \cap L^1(\RR^d)$
	there exists a $\sigma(\Af,\Bf)$-measurable random vector~$\Wf$ and a constant $C_{f,\delta}$ independent of~$n$ such that 
	\[
	\EE\left[\int_{\RR^d} \left|F_{\Wf}(\xx)-f(\xx)\right|^2 \mu(\D\xx)\right]^{1/2} \leq \frac{C_{f,\delta}}{\sqrt{n}}.
	\]
\end{corollary}

\subsection{{Universal approximation by random quantum neural networks}}

{As a corollary of Theorem~\ref{thm:random}{,} we obtain the following universal approximation result:
\begin{corollary} \label{cor:universalityRandom}
Let $\mu$ be a probability measure on $\RR^d$,  let $f \in L^2(\RR^d,\mu)$ and assume that $\pi_a$ is continuous, strictly positive and decays at most polynomially. 
Then for any $\varepsilon >0$ there exist $n \in \N$ and an $\RR^n$-valued $\sigma(\Af,\Bf)$-measurable random vector~$\Wf$
such that  
\begin{equation}\label{eq:universalityRandom}
\EE\left[\int_{\RR^d} \left|F_{\Wf}(\xx)-f(\xx)\right|^2 \mu(\D\xx)\right]^{1/2} \leq  \varepsilon.
\end{equation}
\end{corollary}}
{The proof of this result follows completely analogously to  Corollary~\ref{cor:universality}, using the fact that{,} for a Schwartz function~$g${,} the integral $\overline{L}_{2}[g]$ is finite when $\pi_a$ is continuous, strictly positive and decays at most polynomially.
Corollary~\ref{cor:universalityRandom} shows that also the constructed random quantum circuit is universal in a mean-square sense: any (not necessarily continuous) function~$f$ which is square integrable with respect to a probability measure~$\mu$ can be approximated arbitrarily well using the considered random quantum circuits. 
}

%%%%%%%%%%%%%%%%%%%%%%%%%%%%%%%%%%
%%%%%%%%%%%%%%%%%%%%%%%%%%%%%%%%%%
\section{$L^\infty$-error bounds for variational quantum circuits}
\label{sec:variationalInfty}
We now prove approximation error bounds for variational quantum circuits in the case where the error is measured with respect to the uniform norm on compacts. {The section contains results for both trainable and randomized variational quantum circuits. We start with a result for the former situation. }

\subsection{$L^\infty$-error bounds for trainable variational quantum circuits}

{We start with the case of the quantum circuit introduced in Section~\ref{subsec:circuit}.}
For $R>0$, $\Ff_{R}$ as in~\eqref{eq:Space1R} and
$f \in \Ff_{R}$, let
\begin{equation}
\begin{aligned}
& \|f\|_{\mathcal{B}_2} := \left(\int_{\RR^d} \|\bx\|^2 |\widehat{f}(\bx)|\D \bx\right)^{\frac{1}{2}},
\\ & 
\qquad\text{and}\qquad
\mathcal{B}_R := \{f \in \Ff_{R} \, \colon \, \|f\|_{\mathcal{B}_2} < \infty  \}
\end{aligned}
\end{equation}
be a subset of $\Ff_{R}$ with further integrability properties of the Fourier transform. For functions in $\mathcal{B}_R$ we can complement the $L^2(\R^d,\mu)$-error bound in Theorem~\ref{thm:Approx} by a uniform error bound on compact sets.
\begin{theorem}\label{thm:ApproxUniform} 
	For any $R,M>0$, $f \in \mathcal{B}_{R}$ and
	$n\in\NN$, there exists~$\ttheta\in\TTheta$ such that 
	\begin{equation}\label{eq:uniformApproxError} \begin{aligned}
	&\sup_{\xx \in [-M,M]^d} 
	\left|f(\xx) - \ftrn(\xx)\right| \\ & \leq  \frac{2 (\pi+1) L^{1}[\widehat{f}] 
		+ 8 \pi  M d^{\frac{1}{2}}
		\LOnefhat^{\frac{1}{2}}\|f\|_{\mathcal{B}_2}}{\sqrt{n}}.
\end{aligned}	\end{equation}
	%	with $C_f = \int_{\RR^d} |\widehat{f}(\bx)| \D\bx$. 
\end{theorem}

{The proof of Theorem~\ref{thm:ApproxUniform} is given in Section~\ref{subsec:proof5}. The function class to which Theorem~\ref{thm:ApproxUniform}  is applicable is more restrictive than the one considered in Theorem~\ref{thm:Approx}:  functions in $\Ff_{R}$ possess a bounded Fourier transform, hence any function with $\|f\|_{\mathcal{B}_2}< \infty$ also satisfies $\LOnefhat<\infty$. 
On the other hand, the approximation error bound \eqref{eq:uniformApproxError} holds uniformly for each $x \in [-M,M]^d$ and the approximation error is also of order $\mathcal{O}(\frac{1}{\sqrt{n}})$.}

\subsection{$L^\infty$-universal approximation theorem}
As a corollary to Theorem~\ref{thm:ApproxUniform}{,} we obtain a universal approximation for continuous functions on compact subsets of $\R^d$ and with error measured with respect to the uniform norm. {This complements the universal approximation result in the $L^2$-sense proved above in Corollary~\ref{cor:universality}.}
\begin{corollary}\label{cor:3}
Let $\mathcal{X} \subset \RR^d$ be compact and $f \in \Cc(\mathcal{X},\RR)$.	Then for any $\varepsilon >0$ there exist $n \in \N$, $R>0$ and $\ttheta\in\TTheta$ such that $\Cg_{\nq}(\ttheta,\xx)$ outputs a function $f_{\ttheta}$ with 
\begin{equation}\label{eq:Linftyuniversality}
\sup_{\xx \in \mathcal{X}} |f(\xx)-f_{\ttheta}(\xx)|  \leq \varepsilon.
\end{equation}
\end{corollary}
{The proof of Corollary~\ref{cor:3} is given in Section~\ref{subsec:proof6}. Corollary~\ref{cor:3} proves that the quantum circuit from Section~\ref{subsec:circuit} is universal also in a uniform sense: any continuous function can be approximated arbitrarily well, uniformly on compact sets using our variational quantum circuits. }

\subsection{$L^\infty$-error bounds for reservoir quantum circuits}

{Finally, we turn to the situation of quantum random neural network circuits as considered in Section~\ref{sec:ReservoirQ}.}
For this class of circuits we{,} can also obtain a uniform approximation result for {functions in } a subset of  $\Ffov_b \subset \Ffov$. 
Let~$\Ffov_b$ be the set of functions in~$\Ffov$ for which $\widehat{f}/\pi_a$ is bounded.
% by using the quantum circuit constructed in Section~\ref{sec:ReservoirQ} and the function $
%F_{\ww}$
%with a specific choice of initial distribution $\pi_a$. 
In particular, given an arbitrary compact set, the result below guarantees existence of readout weights for the reservoir quantum circuit which achieve an arbitrarily small uniform approximation error.
Suppose that
\[
\EE\left[\|\mathbf{A}_1\|^2\right]^{\frac{1}{2}}
= \left(\int_{\RR^d} \|\bx\|^2 \pi_a(\bx) \D \bx \right)^{\frac{1}{2}} < \infty.
\]
	 
\begin{theorem}\label{thm:ApproxUniformStronger} 
	For any $M>0$, $f \in \Ffov_b$ and
	$n\in\NN$, there exists an 
 $\RR^n$-valued $\sigma(\Af,\Bf)$-measurable random vector~$\Wf$ such that 
$$
\begin{aligned}
& \EE\left[\sup_{\xx \in [-M,M]^d}  \left|F_{\Wf}(\xx)-f(\xx)\right|\right] 
\\ & \leq \frac{1}{\sqrt{n}} \left(8 \left\|\frac{\widehat{f}}{\pi_a}\right\|_\infty  \left(\frac{\pi}{2^{3/2}}
+ 2 \pi M d^{\frac{1}{2}}
\EE[\|\mathbf{A}_1\|^2]^{\frac{1}{2}}
\right)  +  \Lbarf \right).
\end{aligned}
$$
\end{theorem}

{The proof of Theorem~\ref{thm:ApproxUniformStronger} is given in Section~\ref{subsec:proof7}.}

\subsection{{$L^\infty$-universal approximation by random quantum neural networks}}

{From Theorem~\ref{thm:ApproxUniformStronger} we also obtain a uniform universal approximation result for random quantum neural networks:
\begin{corollary} \label{cor:universalityRandomUniform}
Let $M>0$, $f \in \Cc([-M,M]^d,\RR)$ and assume that $\pi_a$ is continuous, strictly positive, decays at most polynomially and has finite second moments.
Then for any $\varepsilon >0$ there exist $n \in \N$ and an $\RR^n$-valued $\sigma(\Af,\Bf)$-measurable random vector~$\Wf$
such that  
\begin{equation}\label{eq:universalityRandomUniform}
\EE\left[\sup_{\xx \in [-M,M]^d} \left|F_{\Wf}(\xx)-f(\xx)\right|\right] \leq  \varepsilon.
\end{equation}
\end{corollary}}
{The proof of this result follows analogously to  Corollary~\ref{cor:3}, using the fact that for a Schwartz function $g$ the function $\widehat{g}/\pi_a$ is bounded when $\pi_a$ decays at most polynomially and arguing that $\overline{L}_{2}[g]$ is finite under the hypotheses on~$\pi_a$.
Corollary~\ref{cor:universalityRandomUniform} shows that the constructed random quantum circuit is universal also uniformly: any continuous function can be approximated arbitrarily well, uniformly on compact sets using the considered random quantum circuits.
}

{\section{Conclusion}\label{sec:conlcusion}
We have constructed universal variational quantum circuits and proved bounds on the approximation error. A key feature in all the derived bounds is that the approximation error rates do not deteriorate as the dimension of the input data increases. 
For both fully trainable and randomized quantum circuits{} we have considered several classes of functions{} and derived error bounds with respect to both mean-square and uniform metrics. }

{
Firstly, we have considered the large function class of integrable, continuous functions with integrable Fourier transforms. We have constructed trainable variational quantum circuits that are able to approximate such functions in a mean-square sense{,} up to explicitly given error. More generally, we have also obtained a universal approximation result for the same circuit: any square-integrable function can be approximated arbitrarily well by such a quantum neural network in the mean-square sense. }

{Secondly, we have considered a variant of the above quantum circuit{,} in which the parameters inside the circuit are randomly generated and only a final layer of weights after measuring the quantum state is trainable. We have proved that functions with sufficiently integrable  Fourier transform can be approximated in the mean-square sense by such quantum random neural networks. In addition to these quantitative bounds{,} we have also obtained a universal approximation result for quantum random neural networks and quantum extreme learning machines. 
Randomizing the parameters inside the circuit is a key feature of these systems in supervised machine learning applications. It brings the advantage that training is less complex, as it reduces to a linear regression.
}

{Finally, we have extended all these results to the situation when the error is measured uniformly on a compact set, rather than in a mean-square sense.
Depending on the situation and on the nature of inputs (deterministic or stochastic) either of the two error criteria are more suitable. 
}
\section{Proofs}\label{sec:proofs}

\subsection{Proof of Theorem~\ref{thm:Approx}}\label{subsec:proof1}
In this section we prove two propositions, the combination of which directly implies Theorem~\ref{thm:Approx}. {For notational convenience we write $l^{i}(\xx) := b^{i}+\ab^{i} \cdot \xx$. }

\begin{proposition} \label{prop:circuitOutput}
For any $n\in\NN$, $\ttheta\in\TTheta$, 
the representation
$\ftrn = \gtrn$ holds over~$\RR^d$.
\end{proposition}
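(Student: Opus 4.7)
The plan is to unroll the action of $\Cg_\nq(\ttheta,\xx) = \Ug(\ttheta,\xx)\Vg$ on $\ket{0}^{\otimes\nq}$, read off the probabilities $\P_1^n$ and $\P_2^n$, and then match $R - 2R[\P_1^n + \P_2^n]$ to $\gtrn$ through an elementary trigonometric identity.

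First I would simplify the single-qubit operators appearing in each block. Since the matrices $\Rgz(\alpha)$ form a one-parameter subgroup, the cascade of $\Rgz$ rotations inside $\Ug^{(i)}_1$ collapses to $\Rgz(-l^i(\xx))$, so that $\Ug^{(i)}_1 = \Hg\,\Rgz(-l^i(\xx))\,\Hg$. A direct one-qubit computation starting from $\Hg\ket{0} = \tfrac{1}{\sqrt{2}}(\ket{0}+\ket{1})$ then yields $\Ug^{(i)}_1\ket{0} = \cos(l^i(\xx)/2)\ket{0} + \I\sin(l^i(\xx)/2)\ket{1}$, while $\Ug^{(i)}_2\ket{0} = \cos(\gamma^i/2)\ket{0} + \sin(\gamma^i/2)\ket{1}$ is immediate from the definition of $\Rgy$.

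Next I would exploit the block-diagonal structure of $\Ug$: for each $i \in \{1,\ldots,n\}$, the four basis vectors $\{\ket{4(i-1)+j}\}_{j=0,1,2,3}$ of the $i$-th block can be identified with the two-qubit states $\ket{00},\ket{01},\ket{10},\ket{11}$, on which $\Ug$ restricts to $\Ug^{(i)}_1\otimes\Ug^{(i)}_2$. Because $\Vg\ket{0}^{\otimes\nq} = \tfrac{1}{\sqrt{n}}\sum_{i=1}^n\ket{4(i-1)}$ places equal amplitude on the \emph{first} vector of each block, and because the terminal identity block of dimension $n_0$ contributes nothing, the blocks never interfere. Hence $\P_m^n(\ttheta,\xx) = \tfrac{1}{n}\sum_{i=1}^n\bigl|[\Ug^{(i)}_1\otimes\Ug^{(i)}_2]_{m,0}\bigr|^2$, which one reads off directly from the tensor-product amplitudes of step one; in particular, the two terms of interest involve $\cos^2(l^i/2)\sin^2(\gamma^i/2)$ and $\sin^2(l^i/2)\cos^2(\gamma^i/2)$ respectively.

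Finally, substituting these closed forms into the definition of $\ftrn$ and applying the double-angle identities $2\cos^2(x) = 1+\cos(2x)$ and $2\sin^2(x) = 1-\cos(2x)$ collapses $1 - 2[\cos^2(l/2)\sin^2(\gamma/2) + \sin^2(l/2)\cos^2(\gamma/2)]$ to $\cos(l)\cos(\gamma)$, which is exactly the summand defining $\gtrn$. The main obstacle is purely bookkeeping: correctly identifying each size-four block with a two-qubit subsystem, tracking which outcome $m\in\{0,1,2,3\}$ corresponds to which tensor-product component, and confirming that the trailing identity block of dimension $n_0$ plays no role; once this combinatorial setup is in place, every remaining step reduces to a one-qubit trigonometric identity.
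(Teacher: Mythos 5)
Your proposal is correct and follows essentially the same route as the paper: exploit the block-diagonal structure and the support of $\Vg\ket{0}^{\otimes\nq}$ to get $\P_m^{n}=\tfrac{1}{n}\sum_i\bigl|[\Ug^{(i)}_1\otimes\Ug^{(i)}_2]_{m+1,1}\bigr|^2$, compute the first-column tensor-product amplitudes, and finish with double-angle identities. The only cosmetic difference is that you collapse $1-2[\P_1^{n}+\P_2^{n}]$ directly to $\cos(\gamma^{i})\cos(l^{i}(\xx))$ per summand, whereas the paper routes the same algebra through the partial sums $\P_0^{n}+\P_1^{n}$ and $\P_0^{n}+\P_2^{n}$.
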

\begin{proof}
Let us first calculate $\P_0^{n}$. To do this, we first compute 
$$
\begin{aligned}
\Ug \Vg \ket{0}^{\otimes \nq}
& = \Ug \ket{\psi}  =  \frac{1}{\sqrt{n}} \sum_{j=0}^{n-1} \Ug \ket{4j}  
\\ &  =
 \frac{1}{\sqrt{n}} \sum_{j=0}^{n-1} \sum_{k=0}^3\left[\Ug^{(j+1)}_1 \otimes \Ug^{(j+1)}_2\right]_{k+1,1} \ket{4j+k},
\end{aligned}
$$
and consequently
\begin{align*}
\P_0^{n}  & = \sum_{i=0}^{n-1} \left|\bra{4i} \Ug \Vg \ket{0}^{\otimes \nq} \right|^2
\\
 & =
 \sum_{i=0}^{n-1} \left|\bra{4i} \frac{1}{\sqrt{n}} \sum_{j=0}^{n-1} \sum_{k=0}^3\left[\Ug^{(j+1)}_1 \otimes \Ug^{(j+1)}_2\right]_{k+1,1} \ket{4j+k}  \right|^2
  \\ & =
 \frac{1}{n}\sum_{i=0}^{n-1} \left| \left[\Ug^{(i+1)}_1 \otimes \Ug^{(i+1)}_2\right]_{1,1}  \right|^2.
\end{align*}
Computing
\begin{align*}
[\Ug^{(i)}_1 \otimes \Ug^{(i)}_2]_{1,1}
& = [\Ug^{(i)}_1]_{1,1} [\Ug^{(i)}_2]_{1,1} 
\\ & = \cos\left(\frac{\gamma^{i}}{2}\right) 
\left[\Hg \begin{pmatrix} \E^{\frac{\I}{2} l^{i}(\xx)}
& 0 \\
0 & \E^{-\frac{\I}{2} l^{i}(\xx)} \end{pmatrix}  \Hg \right]_{1,1}\\
& =\frac{\cos\left(\frac{\gamma^{i}}{2}\right)}{\sqrt{2}} \left[\Hg \begin{pmatrix} \E^{\frac{\I}{2} l^{i}(\xx)}
& \E^{\frac{\I}{2} l^{i}(\xx)} \\ \E^{-\frac{\I}{2} l^{i}(\xx)}
& -\E^{-\frac{\I}{2} l^{i}(\xx)} \end{pmatrix}   \right]_{1,1}\\
& = \frac{1}{2}\cos\left(\frac{\gamma^{i}}{2}\right) \left(\E^{\frac{\I}{2} l^{i}(\xx)} + \E^{-\frac{\I}{2} l^{i}(\xx)}\right)\\
& = \cos\left(\frac{\gamma^{i}}{2}\right)
\cos\left(\frac{l^{i}(\xx)}{2}\right),
\end{align*} 
then
$
\P_0^{n} =
\frac{1}{n}\sum_{i=1}^{n} \cos\left(\frac{\gamma^{i}}{2}\right)^2
\cos\left(\frac{l^{i}(\xx)}{2}\right)^2$, 
which simplifies, using $\cos(y)^2=\frac{\cos(2y)+1}{2}$, to
\begin{align*}
\P_0^{n} & =
\frac{1}{n}\sum_{i=1}^{n} 
\frac{1}{4}\left(\cos\left(\gamma^{i}\right)+1\right)
\left(\cos\left(l^{i}(\xx)\right)+1\right)\\
& = \frac{1}{4} + \frac{1}{4n}\sum_{i=1}^{n} \cos\left(\gamma^{i}\right)
\cos\left( l^{i}(\xx)\right)
+ \frac{1}{4n}\sum_{i=1}^{n} \cos\left(\gamma^{i}\right)
\\ & \quad + \frac{1}{4n}
\sum_{i=1}^{n} \cos\left(l^{i}(\xx)\right).
\end{align*}

Next, for $m \in \{1,2,3\}$, we have
%denote by $\P_l$ the probability that the measured state is in  $\{l,4+l,\ldots,4(n-1)+l\}$. Then
{\small 
\begin{align*}
& \P_m^{n}  = \sum_{i=0}^{n-1} \left|\bra{4i+m} \Ug \Vg \ket{0}^{\otimes \nq} \right|^2
\\ & =
\sum_{i=0}^{n-1} \left|\bra{4i+m} \frac{1}{\sqrt{n}} \sum_{j=0}^{n-1} \sum_{k=0}^{3}
\left[\Ug^{(j+1)}_1 \otimes \Ug^{(j+1)}_2\right]_{k+1,1} \ket{4j+k} \right|^2
\\ & =
\frac{1}{n}\sum_{i=0}^{n-1} 
\left| \left[\Ug^{(i+1)}_1 \otimes \Ug^{(i+1)}_2\right]_{m+1,1}  \right|^2.
\end{align*}
}
Computing as above 
\begin{align*}
[\Ug^{(i)}_1 \otimes \Ug^{(i)}_2]_{2,1} & = [\Ug^{(i)}_1]_{1,1} [\Ug^{(i)}_2]_{2,1} = \sin\left(\frac{\gamma^{i}}{2}\right)
\cos\left(\frac{l^{i}(\xx)}{2}\right),\\
[\Ug^{(i)}_1 \otimes \Ug^{(i)}_2]_{3,1} & = [\Ug^{(i)}_1]_{2,1} [\Ug^{(i)}_2]_{1,1} = 
\I \cos\left(\frac{\gamma^{i}}{2}\right)
\sin\left(\frac{l^{i}(\xx)}{2}\right),\\
[\Ug^{(i)}_1 \otimes \Ug^{(i)}_2]_{4,1} & = [\Ug^{(i)}_1]_{2,1} [\Ug^{(i)}_2]_{2,1} = 
\I \sin\left(\frac{\gamma^{i}}{2}\right)
\sin\left(\frac{l^{i}(\xx)}{2}\right),
\end{align*}
thus yields
\begin{align*}
\P_1^{n} & = \frac{1}{n}\sum_{i=1}^{n} \sin\left(\frac{\gamma^{i}}{2}\right)^2
\cos\left(\frac{l^{i}(\xx)}{2}\right)^2
\\
\P_2^{n} & = \frac{1}{n}\sum_{i=1}^{n} \cos\left(\frac{\gamma^{i}}{2}\right)^2
\sin\left(\frac{l^{i}(\xx)}{2}\right)^2
\\ 
\P_3^{n} & = \frac{1}{n}\sum_{i=1}^{n} \sin\left(\frac{\gamma^{i}}{2}\right)^2
\sin\left(\frac{l^{i}(\xx)}{2}\right)^2.
\end{align*}
Therefore,
\begin{align*}
\P_0^{n} + \P_1^{n} & = \frac{1}{n}\sum_{i=1}^{n}
\cos\left(\frac{l^{i}(\xx)}{2}\right)^2
= \frac{1}{2} + \frac{1}{2n}\sum_{i=1}^{n} \cos\left(l^{i}(\xx)\right),\\
\P_0^{n} + \P_2^{n} & = \frac{1}{n}\sum_{i=1}^{n} \cos\left(\frac{\gamma^{i}}{2}\right)^2 = \frac{1}{2} + \frac{1}{2n}\sum_{i=1}^{n} \cos\left(\gamma^{i}\right).
\end{align*}
Putting it all together we obtain, for any given $R>0$, that
\[
\begin{aligned}
& R - 2R\left[\P_1^{n}+\P_2^{n}\right]
\\ \quad  & = R\left[1 + 4\P_0^{n} - 2\left(\P_0^{n} + \P_1^{n}\right) - 2\left(\P_0^{n} + \P_2^{n}\right)\right]
\\ & = \frac{1}{n}\sum_{i=1}^{n} R\cos\left(\gamma^{i}\right) \cos\left(l^{i}(\xx)\right).
\end{aligned}
\]
\end{proof}

\begin{remark}
{
In the case of the alternative circuit in Remark~\ref{rmk:alternative} the proof follows by the same argument. Denote by $\Ug\Vg$ the circuit, and recall that it maps $\ket{0}^{\otimes 2n}$ 
to 
$\frac{1}{\sqrt{n}} \sum_{j=0}^{n-1} (\ket{1} \otimes \ket{1} )^{\otimes j} (\Ug_1^{(j+1)}\ket{0} \otimes \Ug_2^{(j+1)} \ket{0} ) \otimes (\ket{0} \otimes \ket{0} )^{\otimes n-j-1}$. Denote for $ i \in \{0,\ldots,n-1\}$, $m\in\{0,1,2,3\}$ the states
$\ket{\psi_{i,m}} =  (\ket{1} \otimes \ket{1} )^{\otimes i} \otimes (\ket{x} \otimes \ket{y} ) \otimes (\ket{0} \otimes \ket{0} )^{\otimes n-i-1}$ with $x,y \in \{0,1\}$  the coefficients in the binary representation $m=2x+y$.
Then, for example, $\P_0^{n}$ reads
%$$
%\begin{aligned}
%	\Ug_{i} \Vg \ket{0}^{\otimes 2n}
%	& = \frac{1}{\sqrt{n}} \sum_{j=0}^{n-1} \Ug_i  (\ket{1} \otimes \ket{1} )^{\otimes j} \otimes (\ket{0} \otimes \ket{0} )^{\otimes n-j}
%	\\ & = \frac{1}{\sqrt{n}} \sum_{j=0}^{n-1}  (\ket{1} \otimes \ket{1} )^{\otimes j-1} \otimes (\Ug^{(j)}_1 \ket{1} \otimes  \Ug^{(j)}_2 \ket{1} ) \otimes (\ket{0} \otimes \ket{0} )^{\otimes n-j}
%\end{aligned}
%$$
%and we compute 
%$$
%\begin{aligned}
% &  \Ug_i  (\ket{1} \otimes \ket{1} )^{\otimes j} \otimes (\ket{0} \otimes \ket{0} )^{\otimes n-j}
% \\ & = \sum_{k=0}^3\left[\Ug^{(j+1)}_1 \otimes \Ug^{(j+1)}_2\right]_{k+1,1} \ket{4j+k} 
%\end{aligned}
%$$
\begin{align*}
	&\P_0^{n}   = \sum_{i=0}^{n-1} \left| \bra{\psi_{i,0}} \Ug_{i} \Vg \ket{0}^{\otimes 2n} \right|^2
	\\
	& =
	\sum_{i=0}^{n-1} \left|\bra{\psi_{i,0}}  \frac{1}{\sqrt{n}} \sum_{j=0}^{n-1} \sum_{m=0}^3\left[\Ug^{(j+1)}_1 \otimes \Ug^{(j+1)}_2\right]_{m+1,1} \ket{\psi_{j,m}}  \right|^2
	\\ & =
	\frac{1}{n}\sum_{i=0}^{n-1} \left| \left[\Ug^{(i+1)}_1 \otimes \Ug^{(i+1)}_2\right]_{1,1}  \right|^2, 
\end{align*} 
which is identical to the analogous expression obtained in the proof of Proposition~\ref{prop:circuitOutput}. 
}
\end{remark}

\begin{proposition} \label{prop:FourierApprox} 
 Let $R>0$ and $f\in\Ff_{R}$.
 For any $n \in \N$, there exists $\ttheta\in\TTheta$ such that
$$
\left(\int_{\RR^d} \left|\ftrn(\xx) - f(\xx)\right|^2 \mu(\D \xx)\right)^{1/2} \leq \frac{\LOnefhat}{\sqrt{n}}.
$$
%where $C_f = \int_{\RR^d} |\widehat{f}(\bx)| \D\bx$.
\end{proposition}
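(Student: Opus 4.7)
The plan is to combine Proposition~\ref{prop:circuitOutput}, which reduces the problem to approximating~$f$ by a function of the form $\gtrn$ in~\eqref{eq:fct}, with a Monte Carlo/probabilistic-method argument in the spirit of Barron~\cite{Barron1993}. The starting point is the Fourier inversion formula: since $f \in L^{1}(\RR^d) \cap \Cc(\RR^d)$ and $\widehat{f} \in L^{1}(\RR^d)$, and $f$ is real-valued, we have
$$
f(\xx) \;=\; \int_{\RR^d} \widehat{f}(\bx) \, \E^{2\pi\I \xx\cdot\bx} \,\D\bx
\;=\; \int_{\RR^d} \bigl|\widehat{f}(\bx)\bigr|\, \cos\bigl(2\pi \xx\cdot\bx + \phi(\bx)\bigr)\,\D\bx,
$$
where $\phi(\bx) := \arg \widehat{f}(\bx)$. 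Assuming $\LOnefhat > 0$ (the case $\LOnefhat = 0$ is trivial), the normalised density $p(\bx) := |\widehat{f}(\bx)|/\LOnefhat$ defines a probability measure, and the above identity rewrites as $f(\xx) = \LOnefhat \,\EE[\cos(2\pi \xx \cdot \bX + \phi(\bX))]$ for $\bX \sim p$.

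Next I would match this representation with the structure of $\gtrn$. Because $\LOnefhat \leq R$, there exists $\gamma^{\ast} \in [0,\pi/2]$ with $R\cos\gamma^{\ast} = \LOnefhat$. I then take independent $\bX^{1},\ldots,\bX^{n}$ with density~$p$ and set
$$
\gamma^{i} := \gamma^{\ast}, \qquad \ab^{i} := 2\pi \bX^{i}, \qquad b^{i} := \phi(\bX^{i}),
$$
so that the random quantum neural network produced by Proposition~\ref{prop:circuitOutput} reads
$$
\gtrn(\xx) \;=\; \frac{\LOnefhat}{n}\sum_{i=1}^{n} \cos\bigl(2\pi \xx\cdot\bX^{i} + \phi(\bX^{i})\bigr),
$$
and each summand is, by construction, an unbiased estimator of $f(\xx)/\LOnefhat$, giving $\EE[\gtrn(\xx)]=f(\xx)$ pointwise.

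Finally I would bound the expected squared $L^{2}(\mu)$ error. Using Fubini (justified by boundedness of the integrand and finiteness of~$\mu$), independence of the~$\bX^{i}$, and the crude estimate $\operatorname{Var}(\cos(\cdot))\leq 1$, one obtains
$$
\EE\!\left[\int_{\RR^d}\!\bigl|\gtrn(\xx)-f(\xx)\bigr|^{2}\mu(\D\xx)\right]
= \int_{\RR^d}\!\operatorname{Var}\bigl(\gtrn(\xx)\bigr)\mu(\D\xx)
\leq \int_{\RR^d} \frac{\LOnefhat^{2}}{n}\,\mu(\D\xx) = \frac{\LOnefhat^{2}}{n}.
$$
Hence there must exist a realisation of $(\bX^{1},\ldots,\bX^{n})$, and therefore a deterministic $\ttheta \in \TTheta$, for which the integrated squared error is bounded by $\LOnefhat^{2}/n$; taking square roots yields the claim.

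There is no serious obstacle here beyond bookkeeping; the one point that needs care is ensuring that all the chosen parameters fit the admissible domain~$\TTheta$ (in particular $\gamma^{\ast}\in[0,2\pi]$), and that the identity $R\cos\gamma^{\ast}=\LOnefhat$ is solvable, which is exactly why the hypothesis $\LOnefhat\leq R$ (i.e.\ $f\in\Ff_{R}$) appears. The rate $n^{-1/2}$ then comes, as usual, from the variance of an i.i.d.\ empirical average, and the dimension~$d$ enters only through~$\LOnefhat$, so no curse of dimensionality arises.
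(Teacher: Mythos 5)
Your argument is correct and delivers exactly the stated bound, and its overall architecture (Fourier inversion, an i.i.d.\ Monte Carlo representation of $f$, a variance estimate, and the probabilistic method to extract a deterministic $\ttheta$) is the same as the paper's; the genuine difference lies in how the Fourier integral is decomposed. The paper works with the Cartesian split $\widehat{f}=\mathrm{Re}\,\widehat{f}+\I\,\mathrm{Im}\,\widehat{f}$: a latent Bernoulli variable $Z_i$ decides whether $\ab^i/(2\pi)$ is sampled from the density proportional to $|\mathrm{Re}\,\widehat{f}|$ (phase $b^i=0$) or to $|\mathrm{Im}\,\widehat{f}|$ (phase $b^i=\pi/2$), the sign of $\widehat{f}$ is carried by random weights $W_i=\pm\LOnefhat$, and consequently $\gamma^i=\arccos(W_i/R)$ varies with $i$. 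You instead use the polar form $\widehat{f}=|\widehat{f}|\,\E^{\I\phi}$, sample from the single density $|\widehat{f}|/\LOnefhat$, push all phase information into $b^i=\phi(\bX^i)$, and keep $\gamma^i\equiv\gamma^\ast$ constant. Your route is cleaner: it needs only one sampling measure, avoids the degenerate cases $\int|\mathrm{Re}\,\widehat{f}|=0$ or $\int|\mathrm{Im}\,\widehat{f}|=0$ that the paper must handle separately, and the admissibility check $\gamma^\ast\in[0,2\pi]$ is immediate from $\LOnefhat\le R$. What the paper's decomposition buys is reusability in Section~\ref{sec:ReservoirQ}: there the phases must be generated \emph{without} knowledge of $f$ (a $\half$-Bernoulli phase in $\{0,\pi/2\}$ and weights $\Af^i\sim\pi_a$, with only the readout layer seeing $\widehat{f}$), which is incompatible with data-dependent phases $\phi(\bX^i)$; it also keeps $b^i$ in a two-point set, which slightly simplifies the constants in the $L^\infty$ bounds of Section~\ref{sec:variationalInfty}. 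Two minor points to make explicit: fix a measurable branch of $\arg$ and set $\phi:=0$ on $\{\widehat{f}=0\}$ (harmless, since the sampling density vanishes there), and note that Fubini applies because $|\gtrn-f|\le 2\LOnefhat$ pointwise and $\mu$ is a probability measure.
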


\begin{proof}
Since $\widehat{f}\in L^{1}(\RR^d)$, the Fourier inversion theorem states that for all $\xx\in \RR^d$,
\[
f(\xx) = \int_{\RR^d} \E^{2\pi \I\xx\cdot \bx}\widehat{f}(\bx) \D \bx.
\]
Since~$f$ is real-valued, we may then write, for any $\xx\in\RR^d$,
\begin{align}\label{eq:auxEq2}
& f(\xx) = \int_{\RR^d} \E^{2\pi \I\xx\cdot \bx}\widehat{f}(\bx) \D \bx
\\ & = \int_{\RR^d}\left\{ \cos{(2\pi \xx \cdot \bx)}\mathrm{Re}[\widehat{f}(\bx)] - \sin{(2\pi \xx \cdot \bx)}\mathrm{Im}[\widehat{f}(\bx)] \right\}\D \bx 
\\
& = 
\int_{\RR^d}\left\{ \cos{(2\pi \xx \cdot \bx)}\mathrm{Re}[\widehat{f}(\bx)] \right. \\ & \left. \qquad \qquad + \cos\left(2\pi \xx \cdot \bx+\frac{\pi}{2}\right)\mathrm{Im}[\widehat{f}(\bx)] \right\}\D \bx.\nonumber
\end{align}
Let $\pf := \LOnefhat^{-1}\int_{\RR^d} |\mathrm{Re}[\widehat{f}(\bx)]| \D \bx \in [0,1]$,
$Z_1,\ldots,Z_n$ i.i.d.\ $\pf$-Bernoulli random variables so that
$\P(Z_i=1) = \pf$ and $\P(Z_i=0) = 1-\pf = \LOnefhat^{-1}\int_{\RR^d} |\mathrm{Im}[\widehat{f}(\bx)]| \D\bx$. 
If $\int_{\RR^d} |\mathrm{Re}[\widehat{f}(\bx)]| \D\bx \neq 0 $, let $\nu_1$ be the probability measure on $\RR^d$ with density $\frac{|\mathrm{Re}[\widehat{f}]|}{\int_{\RR^d} |\mathrm{Re}[\widehat{f}(\bx)]| \D\bx }$, otherwise $\nu_1$ is an arbitrary probability measure on $\RR^d$.
Analogously, if $\int_{\RR^d} |\mathrm{Im}[\widehat{f}(\bx)]| \D\bx \neq 0$, let $\nu_0$ be the probability measure on $\RR^d$ with density $\frac{|\mathrm{Im}[\widehat{f}]|}{\int_{\RR^d} |\mathrm{Im}[\widehat{f}(\bx)]| \D\bx }$, otherwise $\nu_0$ is an arbitrary probability measure on $\RR^d$.
Let $\bU_1,\ldots,\bU_n$ (resp. $\bV_1,\ldots,\bV_n$) be i.i.d.\ random variables with distribution~$\nu_1$ (resp.~$\nu_0$) and assume that $\bU_1,\ldots,\bU_n, \bV_1,\ldots,\bV_n, Z_1,\ldots,Z_n$ are independent. 
Set
\begin{align*}
\mathbf{A}_i & := 2 \pi (Z_i \bU_i+(1-Z_i)\bV_i),
\qquad
B_i := \frac{\pi}{2}(1-Z_i),\\
\quad
W_i & := L^{1}[\widehat{f}]\left[ \frac{\mathrm{Re}[\widehat{f}](\bU_i)}{|\mathrm{Re}[\widehat{f}](\bU_i)|} Z_i + \frac{\mathrm{Im}[\widehat{f}](\bV_i)}{|\mathrm{Im}[\widehat{f}](\bV_i)|}(1-Z_i) \right],
\end{align*}
with the quotient set to zero when the denominator is null, and consider the random function 
$$
F(\xx) := \frac{1}{n}\sum_{i=1}^n W_i \cos(B_i+\mathbf{A}_i\cdot \xx).
$$
We calculate 
\[
\begin{aligned}
&\LOnefhat \pf\EE\left[\frac{\mathrm{Re}[\widehat{f}](\bU_1)}{|\mathrm{Re}[\widehat{f}](\bU_1)|}  \cos(2\pi \bU_1\cdot \xx)\right] 
\\ & =  \LOnefhat 
\pf\int_{\RR^d} \frac{\mathrm{Re}[\widehat{f}](\bx)}{|\mathrm{Re}[\widehat{f}](\bx)|}  \cos(2\pi \bx \cdot \xx) \nu_1(\D\bx) 
\\ & = \LOnefhat \pf\int_{\RR^d} \frac{\mathrm{Re}[\widehat{f}](\bx)}{\int_{\RR^d} |\mathrm{Re}[\widehat{f}(z)]| \D z}  \cos(2\pi \bx \cdot \xx) \D\bx  
\\ & = \int_{\RR^d} {\mathrm{Re}[\widehat{f}](\bx)}  \cos(2\pi \bx \cdot \xx) \D\bx,
\end{aligned}
\]
and similarly
\[
\begin{aligned}
 & \LOnefhat  (1-\pf)\EE\left[\frac{\mathrm{Im}[\widehat{f}](\bV_1)}{|\mathrm{Im}[\widehat{f}](\bV_1)|}
\cos\left(\frac{\pi}{2}+2 \pi \bV_1\cdot\xx\right)\right]
\\ & =   \int_{\RR^d} {\mathrm{Im}[\widehat{f}](\bx)} \cos\left(\frac{\pi}{2}+2 \pi \bx \cdot \xx\right) \D\bx 
\end{aligned}
\]
yielding
\[
\begin{aligned}
 & \EE[F(\xx)] = \EE[W_1 \cos(B_1+\mathbf{A}_1\cdot \xx)]
\\ & =  \LOnefhat \left\{\pf\EE\left[\frac{\mathrm{Re}[\widehat{f}](\bU_1)}{|\mathrm{Re}[\widehat{f}](\bU_1)|}  \cos(2\pi \bU_1\cdot \xx)\right]  \right. 
\\ & \left. \, + (1-\pf)\EE\left[\frac{\mathrm{Im}[\widehat{f}](\bV_1)}{|\mathrm{Im}[\widehat{f}](\bV_1)|}
\cos\left(\frac{\pi}{2}+2 \pi \bV_1\cdot\xx\right)\right]\right\}
 =  f(\xx). 
\end{aligned}
\]
In particular, using the i.i.d.\ assumption and Fubini, we obtain
\begin{equation}\label{eq:L2estimate}\begin{aligned}
& \EE\left[ \int_{\RR^d} |f(\xx) - F(\xx)|^2 \mu(\D \xx) \right]
%& = \int_{\RR^d} \EE\left[  |f(\xx)-F(\xx)|^2 \right] \mu(\D\xx)\\
 = \int_{\RR^d} \VV[F(\xx)] \mu(\D \xx) 
\\ & = \frac{1}{n^2} \int_{\RR^d} \VV\left[\sum_{i=1}^n W_i \cos(B_i + \mathbf{A}_i\cdot \xx)\right] \mu(\D\xx)
\\ & = \frac{1}{n} \int_{\RR^d} \VV\left[ W_1 \cos(B_1 + \mathbf{A}_1\cdot \xx)\right] \mu(\D\xx)
\\ & \leq \frac{1}{n} \int_{\RR^d} \EE\left[\left(W_1 \cos(B_1 + \mathbf{A}_1\cdot \xx)\right)^2\right] \mu(\D\xx)
\\ & \leq \frac{1}{n}  \EE\left[W_1^2\right] 
 \leq  \frac{1}{n} \left(\int_{\RR^d} |\widehat{f}(\bx)| \D\bx\right)^2
= \frac{1}{n} \LOnefhat^2.
\end{aligned}
\end{equation}
Since $\P(Z\leq B)>0$ for any non-negative random variable~$Z$ with $\EE[Z]\leq B$ for $B>0$, then
%Indeed, if  $\tilde{Z}\geq 0$ is a random variable with $\EE[\tilde{Z}]= 0$, 
%then necessarily $\tilde{Z}=0$ $\P$-a.s. 
%Thus, in our case $\P(Z\leq B)=0$ would imply that
% $\tilde{Z}=Z-B$ satisfies $\tilde{Z}> 0$ $\P$-a.s. and thus $\EE[Z]-B=\EE[\tilde{Z}]> 0$.  
there exists $\omega \in \Omega$ such that 
$F_\omega(\xx) = \frac{1}{n}\sum_{i=1}^n W_i(\omega) \cos(B_i(\omega)+\mathbf{A}_i(\omega)\cdot\xx)$ satisfies
$$
\int_{\RR^d} |f(\xx)-F_\omega(\xx)|^2 \mu(\D\xx) \leq  \frac{1}{n} \left(\int_{\RR^d} |\widehat{f}(\bx)| \D\bx\right)^2.
$$
It remains to show that $F_\omega=f_{\ttheta}$ for a suitable choice of weights
$\ttheta=(\ab^{i},b^{i},\gamma^{i})_{i=1,\ldots,n}$.
This follows directly by choosing $b^{i}=B_i(\omega)$, 
$\ab^{i}=\mathbf{A}_i(\omega)$ and $\gamma^{i} = \arccos(\frac{W_i(\omega)}{R})$ (so that $R\cos\left(\gamma^{i}\right)= W_i(\omega) $), which is well defined because 
$W_i(\omega) = \LOnefhat \phi_i$ for some $\phi_i \in \{-1,1\}$ and thus $|\frac{W_i(\omega)}{R}|\leq 1$ given the constraint $\LOnefhat\leq R$. Therefore, 
$$
\left(\int_{\RR^d} |f(\xx)-f_{\ttheta}(\xx)|^2 \mu(\D\xx)\right)^{1/2} \leq  \frac{\LOnefhat}{\sqrt{n}}.
$$
\end{proof}

\subsection{Proof of Corollary~\ref{cor:universality}}\label{subsec:proof2}
\begin{proof}
We first show that~$f$ can be approximated up to error $\frac{\varepsilon}{2}$ by a function in 
$\Cc_c^\infty(\RR^d)$. This follows by standard arguments, which we give for completeness. 
By \cite[Lemma~1.35]{Kallenberg2002} there exists a bounded continuous function $g \colon \RR^d \to \R$ with $\left|\int_{\RR^d} |f(\xx)-g(\xx)|^2 \mu(\D\xx)\right|^{\half} \leq  \frac{\varepsilon}{6}$. Denote by $\chi_m \colon \RR^d \to [0,1]$ a continuous function with $\chi_m(\xx)=1$ for $\xx \in [-m,m]^d$ and $\chi_m(\xx)=0$ for $\xx \in \RR^d \setminus [-m-1,m+1]^d$. Then 
$\left(\int_{\RR^d} |g(\xx)-\chi_m(\xx)g(\xx)|^2 \mu(\D\xx)\right)^{\half} \leq \sup_{x \in \RR^d } |g(\xx)| \mu(\{\RR^d \setminus [-m,m]^d\}) \leq \frac{\varepsilon}{6}$ for $m$ chosen sufficiently large. 
Finally, since $\Cc_c^\infty(\RR^d)$ is dense in $\Cc_c(\RR^d)$ in the supremum norm, 
there exists $h \in \Cc_c^\infty(\RR^d)$ with 
\begin{align}\label{eq:auxEq1}
 & \left(\int_{\RR^d} |f(\xx)-h(\xx)|^2 \mu(\D\xx)\right)^{\half} \\
 & \leq \|f-g\|_{L^2(\mu)} + \|g-\chi_m g \|_{L^2(\mu)} + \|\chi_m g - h \|_{L^2(\mu)} \leq \frac{\varepsilon}{2}.\nonumber
\end{align}

Now we show how to apply Theorem~\ref{thm:Approx} to~$h$. 
Since~$h$ is a Schwartz function, its Fourier transform $\widehat{h}$ is as well and~$h$ 
and~$\widehat{h}$ are both integrable. With $n=\lceil (2C_h \varepsilon^{-1})^2 \rceil$ and $R = L^{1}[\widehat{h}]$, Theorem~\ref{thm:Approx} yields the existence of $\ttheta\in\TTheta$ such that 
\[
\left(\int_{\RR^d} |h(\xx)-f_{\ttheta}(\xx)|^2 \mu(\D\xx)\right)^{\half} \leq  \frac{ L^{1}[\widehat{h}]}{\sqrt{n}} \leq \frac{\varepsilon}{2}. 
\]
This estimate together with~\eqref{eq:auxEq1} then imply~\eqref{eq:universality}, as claimed. 
\end{proof}

\subsection{Proof of Theorem~\ref{thm:random}}\label{subsec:proof3}
Propositions~\ref{prop:FourierApproxRandom} and~\ref{prop:circuitOutputRandom} are the essential tools, directly implying Theorem~\ref{thm:random}.
For $\ww \in\RR^n$, mimicking~\eqref{eq:fct},
introduce the (random) map $G_{\ww}:\RR^n\to\RR$ and,
for each $i=1,\ldots, n$, the linear map $L^i$ by

\begin{equation}\label{eq:randomG}
\begin{aligned}
G_{\ww}(\xx) & := \frac{1}{n}\sum_{i=1}^{n} W_i \cos\left(L^i(\xx)\right),
\\
L^i(\xx)  & := \frac{\pi}{2} B^{i}+2 \pi \Af^{i} \cdot \xx.
\end{aligned}
\end{equation}

\begin{proposition}\label{prop:FourierApproxRandom}
%Let $n \in \N$, $\Bf=(B^{(i)})_{i=1,\ldots,n}$ i.i.d.\ with $\half$-Bernoulli distribution and $\Af=(A^{(i)})_{i=1,\ldots,n}$ i.i.d.\ with density~$\pi_a$, and~$\Af$ and~$\Bf$ independent.
 For any $f \in\Ffov$, there exist a $\sigma(\Af,\Bf)$-measurable random vector~$\Wf$ such that the random function~$G_{\Wf}$ in~\eqref{eq:randomG} satisfies
$$
\EE\left[\int_{\RR^d} \left|G_{\Wf}(\xx)-f(\xx)\right|^2 \mu(\D \xx)\right]^{1/2} \leq \frac{\Lbarf}{\sqrt{n}}.
$$
\end{proposition}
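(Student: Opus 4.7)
The strategy is to adapt the argument of Proposition~\ref{prop:FourierApprox} to the new setting. The crucial difference is that here the random variables $\Af^i$, $B^i$ are not chosen by us but are sampled from the prescribed distributions $\pi_a$ and $\half$-Bernoulli, so we compensate by choosing the readout weights $W_i$ via importance sampling with respect to $\widehat{f}/\pi_a$. Once we exhibit an unbiased estimator of $f(\xx)$ of the form $W_i \cos(L^i(\xx))$, the proof reduces to a standard variance computation followed by Jensen's inequality.

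\textbf{Step 1 (Fourier inversion).} Using $\widehat{f} \in L^1(\RR^d)$ and the fact that $f$ is real-valued, exactly as in~\eqref{eq:auxEq2}, rewrite
\[
f(\xx) = \int_{\RR^d} \cos(2\pi\xx\cdot\bx)\,\mathrm{Re}[\widehat{f}(\bx)] \D\bx + \int_{\RR^d} \cos\!\left(\tfrac{\pi}{2}+2\pi\xx\cdot\bx\right)\mathrm{Im}[\widehat{f}(\bx)] \D\bx.
\]
The two integrals correspond precisely to the two values $B^i \in \{0,1\}$ appearing in $L^i(\xx)=\tfrac{\pi}{2}B^i+2\pi\Af^i\cdot\xx$.

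\textbf{Step 2 (Importance-sampled weights).} Define the $\sigma(\Af,\Bf)$-measurable vector $\Wf=(W_1,\dots,W_n)$ by
\[
W_i := 2\,\frac{(1-B^i)\,\mathrm{Re}[\widehat{f}(\Af^i)] + B^i\,\mathrm{Im}[\widehat{f}(\Af^i)]}{\pi_a(\Af^i)},
\]
with the convention $0/0 := 0$; this is well defined because $|\widehat{f}|\ll \pi_a$ by the assumption $f\in\Ffov$. A direct calculation, splitting over $B^i\in\{0,1\}$ and inserting the density $\pi_a$, shows that for every $\xx\in\RR^d$,
\[
\EE[W_i \cos(L^i(\xx))] = f(\xx),
\]
so the $n$ i.i.d.\ summands defining $G_{\Wf}(\xx)$ are unbiased estimators of $f(\xx)$.

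\textbf{Step 3 (Variance bound).} By Fubini and the i.i.d.\ property,
\[
\EE\!\left[\int_{\RR^d}\!|G_{\Wf}(\xx)-f(\xx)|^2\mu(\D\xx)\right]
= \frac{1}{n}\int_{\RR^d}\VV[W_1\cos(L^1(\xx))]\,\mu(\D\xx)
\leq \frac{1}{n}\,\EE[W_1^2],
\]
using $\cos^2\leq 1$ and $\mu(\RR^d)=1$. The expectation $\EE[W_1^2]$ is computed by noting that $B^i(1-B^i)=0$ kills the cross term, giving
\[
\EE[W_1^2] = 2\int_{\RR^d}\frac{(\mathrm{Re}[\widehat{f}(\bx)])^2+(\mathrm{Im}[\widehat{f}(\bx)])^2}{\pi_a(\bx)}\D\bx
= 2\int_{\RR^d}\frac{|\widehat{f}(\bx)|^2}{\pi_a(\bx)}\D\bx = \Lbarf^2.
\]

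\textbf{Step 4 (Jensen).} Finally, applying Jensen's inequality to the concave map $t\mapsto\sqrt{t}$ on the outer expectation,
\[
\EE\!\left[\int_{\RR^d}|G_{\Wf}(\xx)-f(\xx)|^2\mu(\D\xx)\right]^{1/2}
\leq \sqrt{\EE\!\left[\int_{\RR^d}|G_{\Wf}(\xx)-f(\xx)|^2\mu(\D\xx)\right]}
\leq \frac{\Lbarf}{\sqrt{n}},
\]
which is the claim.

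The only subtle point is verifying the integrability conditions that justify Fubini in Step~3 and the well-definedness of the importance weights; both follow immediately from the definition of $\Ffov$ in~\eqref{eq:Space1Bar}, so no substantial obstacle arises. Unlike Proposition~\ref{prop:FourierApprox}, no probabilistic selection argument is needed since the result is stated in expectation over $(\Af,\Bf)$ rather than for a deterministic choice of weights.
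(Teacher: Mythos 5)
Your proposal is correct and follows essentially the same route as the paper's proof: the same importance-sampled weights $W_i = \frac{2}{\pi_a(\Af^i)}\{(1-B^i)\mathrm{Re}[\widehat{f}](\Af^i) + B^i\mathrm{Im}[\widehat{f}](\Af^i)\}$, the same unbiasedness computation via~\eqref{eq:auxEq2}, and the same variance bound $\EE[W_1^2] = \Lbarf^2$. The only cosmetic difference is your invocation of Jensen in Step~4, which is unnecessary since the claimed bound follows by simply taking square roots of both sides of the inequality $\EE[\cdots] \leq \Lbarf^2/n$.
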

\begin{proof}
For each $i\in\{1,\ldots, n\}$, the random variable
\[
W_i := \frac{2}{\pi_a(\mathbf{A}^{i})} 
\left\{\left(1-B^{i}\right) \mathrm{Re}[\widehat{f}](\mathbf{A}^{i})
+ B^{i}\mathrm{Im}[\widehat{f}](\mathbf{A}^{i}) \right\},
\]
is well defined since $|\widehat{f}|\ll \pi_a$. 
The independence and the i.i.d.\ assumptions then yield
\[
\begin{aligned}
& \EE\left[G_{\Wf}(\xx)\right]
 = \EE\left[W_1 \cos\left(L^1(\xx)\right)\right]
\\ &  =  \int_{\RR^d} 
\left\{\mathrm{Re}[\widehat{f}](\bx) \cos(2 \pi \bx \cdot \xx)  \right.
\\ &  \left. \qquad  + \mathrm{Im}[\widehat{f}](\bx) \cos\left(\frac{\pi}{2}+2 \pi \bx \cdot \xx\right)\right\} \D\bx
  = f(\xx),
\end{aligned}
\]
where the last step follows by~\eqref{eq:auxEq2}. 
Proceeding as in the proof of Proposition~\ref{prop:FourierApprox}, we obtain
\[\begin{aligned}
& \EE\left[ \int_{\RR^d} \left|f(\xx) - G_{\Wf}(\xx)\right|^2 \mu(\D\xx) \right] = \int_{\RR^d} \VV\left[G_{\Wf}(\xx)\right] \mu(\D\xx) 
	\\ & = \frac{1}{n^2} \int_{\RR^d} \VV\left[\sum_{i=1}^n W_i \cos\left(L^i(\xx)\right)\right] \mu(\D\xx)
	\\ & = \frac{1}{n} \int_{\RR^d} \VV\left[ W_1 \cos\left(L^1(\xx)\right)\right] \mu(\D\xx)
	\\ & \leq \frac{1}{n} \int_{\RR^d} \EE\left[\left(W_1 \cos\left(L^1(\xx)\right)\right)^2\right] \mu(\D\xx)
	\\ & \leq \frac{1}{n}  \EE\left[W_1^2\right]
	  =  \frac{2}{n} \int_{\RR^d} 
	\frac{1}{\pi_a(\bx)}  \left( \mathrm{Re}[\widehat{f}](\bx)^2 + \mathrm{Im}[\widehat{f}](\bx)^2  \right) \D\bx,
\end{aligned}
\]
which implies the claimed bound.
\end{proof}

\begin{remark} As mentioned earlier, the proof above yields an analogous result for the circuit constructed in Section~\ref{sec:variational}. 
Indeed, choosing $R=\max(W_1,\ldots,W_n)$, 
then
$\Gamma^{i} = \arccos(\frac{W_i}{R})$ is well defined as a $\sigma(\Af,\Bf)$-measurable random variable, and so is~$R$, so that the random function
$\overline{F}(\xx) :=  \frac{1}{n}\sum_{i=1}^{n} R\cos\left(\Gamma^{i}\right) \cos(\frac{\pi}{2} B^{i}+2 \pi \Af^{i} \cdot \xx)$
satisfies
\[
\EE\left[\int_{\RR^d} \left|\overline{F}(\xx)-f(\xx)\right|^2 \mu(\D\xx)\right]^{1/2} \leq \frac{\Lbarf}{\sqrt{n}}.
\]
\end{remark}

The next result shows that the functions~$G_{\ww}$ in Proposition~\ref{prop:FourierApproxRandom} can be realised by the considered quantum circuit.

\begin{proposition}\label{prop:circuitOutputRandom}
For $n \in \N$, given~$\overline{\Cg}_{\nqb}(\ttheta)$, and $\ww \in \RR^n$,
the identity $F_{\ww}\equiv G_{\ww}$ holds on~$\RR^n$.
%\xx) := \displaystyle \sum_{j=0}^{n-1} w_j \left[2\PP_{2j}(\xx)-\frac{1}{n}\right]$. Then
%\[
%F_{\ww}(\xx) = \frac{1}{n} \sum_{i=1}^{n} w_i \cos\left( \frac{\pi}{2} B^{(i)}+2\pi a^{(i)}\xx\right).
%\]
\end{proposition}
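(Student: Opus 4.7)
The plan is to trace the state of the circuit $\overline{\Cg}_{\nqb}(\ttheta)$ through its two stages and then express $\PPb_{2j}(\xx)$ directly in terms of the angle $L^{j+1}(\xx)$, so that the linear combination defining $F_{\ww}$ collapses to that defining $G_{\ww}$. This will closely parallel the computation done in Proposition~\ref{prop:circuitOutput}, but with a simpler two-qubit-block-in-one-qubit structure (there is no $\Ug_2$ factor and no $\gamma$-rotation), so the bookkeeping is lighter.

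First, I would apply $\overline{\Vg}$ to $\ket{0}^{\otimes\nqb}$, obtaining $\ket{\overline{\psi}} = \tfrac{1}{\sqrt{n}}\sum_{i=0}^{n-1}\ket{2i}$. The block-diagonal form of $\overline{\Ug}(\xx)$ then gives, for $i \in \{0,\ldots,n-1\}$ and $k \in \{0,1\}$,
$$
\bra{2i+k}\overline{\Ug}(\xx)\overline{\Vg}\ket{0}^{\otimes\nqb}
\;=\; \frac{1}{\sqrt{n}}\bigl[\overline{\Ug}^{(i+1)}_1(\xx)\bigr]_{k+1,1},
$$
since the block $\overline{\Ug}^{(i+1)}_1(\xx)$ acts on the two-dimensional subspace spanned by $\ket{2i},\ket{2i+1}$ and the remaining blocks contribute zero. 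Consequently
$$
\PPb_{2j}(\xx) \;=\; \frac{1}{n}\,\bigl|[\overline{\Ug}^{(j+1)}_1(\xx)]_{1,1}\bigr|^2.
$$

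The next step is to simplify $\overline{\Ug}^{(i)}_1(\xx)$ by exploiting that all $\Rgz$ gates are diagonal and therefore commute, so that their product equals a single rotation by the summed angle:
$$
\overline{\Ug}^{(i)}_1(\xx) \;=\; \Hg\,\Rgz\!\left(-\tfrac{\pi}{2}B^i - 2\pi\Af^i\cdot\xx\right)\Hg
\;=\; \Hg\,\Rgz(-L^i(\xx))\,\Hg.
$$
A short computation, identical in spirit to the one in the proof of Proposition~\ref{prop:circuitOutput}, then yields
$$
\bigl[\Hg\,\Rgz(-\alpha)\,\Hg\bigr]_{1,1} = \cos\!\left(\tfrac{\alpha}{2}\right),
$$
so that $[\overline{\Ug}^{(i)}_1(\xx)]_{1,1} = \cos(L^i(\xx)/2)$. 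Applying the half-angle identity $\cos^2(y/2) = \tfrac{1+\cos(y)}{2}$ gives
$$
2\PPb_{2j}(\xx) - \frac{1}{n} \;=\; \frac{2}{n}\cos^2\!\left(\tfrac{L^{j+1}(\xx)}{2}\right) - \frac{1}{n}
\;=\; \frac{\cos(L^{j+1}(\xx))}{n}.
$$

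Substituting this into~\eqref{eq:randomF} and reindexing $i = j+1$ produces
$$
F_{\ww}(\xx) \;=\; \sum_{j=0}^{n-1} w_j\,\frac{\cos(L^{j+1}(\xx))}{n}
\;=\; \frac{1}{n}\sum_{i=1}^{n} w_{i-1}\cos(L^i(\xx)) \;=\; G_{\ww}(\xx),
$$
where the last equality uses the identification $W_i = w_{i-1}$ between the entries of the weight vector $\ww = (w_0,\ldots,w_{n-1})$ and the coefficients $W_i$ appearing in~\eqref{eq:randomG}. I do not expect any real obstacle here; the only delicate point is matching conventions between the block ordering of $\overline{\Ug}$, the indexing in $\ket{\overline{\psi}}$, and the indexing $i = 1,\ldots,n$ in the formula for $G_{\ww}$, which the reindexing above handles cleanly.
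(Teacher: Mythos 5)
Your proof is correct and follows essentially the same route as the paper's: use the block-diagonal structure to reduce $\PPb_{2j}$ to $\frac{1}{n}\left|[\overline{\Ug}^{(j+1)}_1]_{1,1}\right|^2$, compute that entry as $\cos(L^{j+1}(\xx)/2)$ via the Hadamard conjugation of the combined $\Rgz$ rotation, and apply the double-angle identity before substituting into~\eqref{eq:randomF}. Your explicit handling of the index shift between $w_j$ ($j=0,\ldots,n-1$) and $W_i$ ($i=1,\ldots,n$) is a welcome clarification of a point the paper leaves implicit.
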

\begin{proof}
The proof requires the computation of the probabilities~$\PPb_k$ for~$k$ even. 
First,
$$
\begin{aligned}
& \overline{\Ug}\, \overline{\Vg} \ket{0}^{\otimes \nqb}  = \overline{\Ug} \ket{\overline{\psi}}  =  \frac{1}{\sqrt{n}} \sum_{j=0}^{n-1} \overline{\Ug} \ket{2j}  
\\ &  =
\frac{1}{\sqrt{n}} \sum_{j=0}^{n-1}  \left\{\left[\overline{\Ug}^{(j+1)}_1\right]_{1,1} \ket{2j} +  \left[\overline{\Ug}^{(j+1)}_1\right]_{2,1} \ket{2j+1}\right\}.
\end{aligned}
$$
Therefore, for $k=2j+m$, $j \in \{0,\ldots,n-1\}, m \in \{0,1\}$, 
\begin{equation}\label{eq:auxEq3} 
\PPb_k  =  \left|\bra{k} \overline{\Ug}\, \overline{\Vg} \ket{0}^{\otimes \nqb}\right|^2
 =
 \frac{1}{n} 
 \left| \left[\overline{\Ug}^{(j+1)}_1\right]_{m+1,1}  \right|^2.
\end{equation}
Similarly as in the proof of Proposition~\ref{prop:circuitOutput} we now obtain for $i=j+1$ that
\[ \begin{aligned}
&\left[\overline{\Ug}^{(i)}_1\right]_{1,1}  = \left[\Hg \begin{pmatrix} \E^{\frac{\I}{2} L^i(\xx)} & 0 \\ 0 & \E^{-\frac{\I}{2} L^i(\xx)} \end{pmatrix}  \Hg \right]_{1,1}
\\ & 	 =\frac{1}{\sqrt{2}} \left[\Hg \begin{pmatrix} \E^{\frac{\I}{2} L^i(\xx)} & \E^{\frac{\I}{2} L^i(\xx)} \\ \E^{-\frac{\I}{2} L^i(\xx)} & -\E^{-\frac{\I}{2} L^i(\xx)} \end{pmatrix}   \right]_{1,1}
	 = \cos\left(\frac{L^i(\xx)}{2}\right).
%\\
%	[\overline{\Ug}^{(i)}_1]_{2,1} &  = \I \sin(\frac{1}{2}( B^{(i)}+a^{(i)} x))
\end{aligned} 
\]
By plugging this into~\eqref{eq:auxEq3} and using the double-angle formula we then obtain
$$
\PPb_{2(i-1)}
=  \frac{1}{n} \left|\cos\left(\frac{1}{2}L^i(\xx)\right)\right|^2
=  \frac{1}{2n} \cos\left(L^i(\xx)\right) + \frac{1}{2n}.
$$
Inserting this expression into $F_{\ww}$ in~\eqref{eq:randomF} then yields the claimed representation. 
\end{proof}

\subsection{Proof of Corollary~\ref{cor:2}} \label{subsec:proof4}
\begin{proof}
	The assumption on $f$ guarantees that $f \in \Cc(\RR^d)$ and  $\widehat{f} \in L^1(\RR^d)$. In addition, $\pi_a$ is strictly positive hence $|\widehat{f}|\ll \pi_a$ because $\pi_a(\xx)\neq 0$ for all $\xx \in \RR^d$. 
	Finally, $\pi_a \geq \delta t_\nu$ and thus inserting $t_\nu(\bx) = \frac{\Gamma((\nu+d)/2)}{\Gamma(\nu/2) \nu^{d/2} \pi^{d/2}} (1 + \nu^{-1} \|\bx\|^2)^{-\frac{\nu}{2}}$ yields, with $c:=\frac{1}{\delta} \frac{\Gamma(\frac{\nu}{2}) \nu^{d/2} \pi^{d/2}}{\Gamma(\frac{\nu+d}{2})}$,
	\begin{equation} \label{eq:auxEq4} \begin{aligned}
			& \int_{\RR^d} \frac{|\widehat{f}(\bx)|^2}{\pi_a(\bx)} \D\bx  \leq 
			\int_{\RR^d} \frac{|\widehat{f}(\bx)|^2}{\delta t_\nu(\bx)} \D\bx 
			\\ & \leq 
			c	\int_{\RR^d} |\widehat{f}(\bx)|^2(1 + \nu^{-1} \|\bx\|^2)^{\frac{\nu+d}{2}} \D\bx 
			\\ & \leq 
	c	 \max\left(1,\frac{1}{\nu}\right)^{\frac{\nu+d}{2}}	\int_{\RR^d} |\widehat{f}(\bx)|^2(1 + \|\bx\|^2)^{\frac{\nu+d}{2}} \D\bx,
	\end{aligned}	
 \end{equation}
	which is finite since $f \in \Hh^s(\RR^d)$ and $s =  (\nu+d)/2$. 
	The hypotheses of Theorem~\ref{thm:random} are thus satisfied and so the claim follows from Theorem~\ref{thm:random} and the bound~\eqref{eq:auxEq4}. 
\end{proof}

\subsection{Proof of Theorem \ref{thm:ApproxUniform}}\label{subsec:proof5}
\begin{proof} The proof is identical to that of Theorem~\ref{thm:Approx}, except that the $L^2$-bound in \eqref{eq:L2estimate} is replaced by an $L^\infty$-bound, and we use the same notation as there. 
Define first for all $\xx \in [-M,M]^d$, $i=1,\ldots,n$ the random variables  $U_{i,\xx} = W_i \cos(B_i+\mathbf{A}_i\cdot \xx)$. Then the $L^\infty$-error that we aim to bound can be rewritten as 
\[\begin{aligned}
&	\EE\left[ \sup_{x \in [-M,M]^d} |f(\xx) - F(\xx)| \right]
	\\ &  = \EE\left[ \sup_{x \in [-M,M]^d} \left| 
 \frac{1}{n}\sum_{i=1}^n \left(U_{i,\xx}-\EE[U_{i,\xx}]\right)\right| \right] .
\end{aligned}
\]
Let $\varepsilon_1,\ldots,\varepsilon_d$ be i.i.d.\ Rademacher random variables independent of $\mathbf{A}=(\mathbf{A}_1,\ldots,\mathbf{A}_n)$ and $\mathbf{B}=(B_1,\ldots,B_n)$. Using symmetrisation, we then obtain 
\[\begin{aligned}
&	\EE\left[ \sup_{x \in [-M,M]^d} |f(\xx) - F(\xx)| \right]
\\	& \leq 2 \EE\left[ \sup_{x \in [-M,M]^d} \left| \frac{1}{n}\sum_{i=1}^n \varepsilon_i U_{i,\xx}\right| \right] .
\end{aligned}
\]
For any $\ab=(\ab_1,\ldots,\ab_n) \in (\R^d)^n$, $\mathbf{b}=(b_1,\ldots,b_n) \in \R^n$, 
consider the set 
\[T_{\ab,\mathbf{b}} 
:= \{
(l^{i}(\xx))_{i=1,\ldots,n} \,\colon \, \xx \in [-M,M]^d 
\},
\]
with the function~$l^i(\xx) = b_i+\ab_i\cdot \xx$,
and for $\mathbf{w}=(w_1,\ldots,w_n) \in \R^n$ with $|w_i|\leq L^{1}[\widehat{f}]$  define the maps $\varrho_{w_i} \colon \R \to \R$ via $\varrho_{w_i}(x) = \frac{w_i}{L^{1}[\widehat{f}]} (\cos(x)-1)$. 
Then independence yields that
{\small
\[\begin{aligned}
&\EE\left[ \sup_{\xx \in [-M,M]^d} \left| \frac{1}{n}\sum_{i=1}^n \varepsilon_i U_{i,\xx}\right| \right] \\ & = \EE\left[ \left. \EE\left[ \sup_{\xx \in [-M,M]^d} \left|  \frac{1}{n}\sum_{i=1}^n \varepsilon_i w_i \cos(l^{i}(\xx))  \right| \right]  \right|_{(\mathbf{w},\ab,\mathbf{b})=(\mathbf{W},\mathbf{A},\mathbf{B})} \right].
\end{aligned}\]
}
and with the above definitions and Jensen's inequality,
\[\begin{aligned}
& \EE\left[ \sup_{\xx \in [-M,M]^d} \left|  \frac{1}{n}\sum_{i=1}^n \varepsilon_i w_i \cos(l^{i}(\xx))  \right| \right] 
%\\ & \leq L^{1}[\widehat{f}]  \EE\left[ \sup_{\xx \in %[-M,M]^d} \left|  \frac{1}{n}\sum_{i=1}^n \varepsilon_i \varrho_{w_i} (l^{i}(\xx)) \right| \right] 
%  + \EE\left[ \left|  \frac{1}{n}\sum_{i=1}^n \varepsilon_i w_i \right| \right] 
\\ & \leq  L^{1}[\widehat{f}]  \EE\left[ \sup_{\mathbf{t} \in T_{\ab,b}} \left|  \frac{1}{n}\sum_{i=1}^n \varepsilon_i \varrho_{w_i} (t_i) \right| \right] + \VV\left[\frac{1}{n}\sum_{i=1}^n \varepsilon_i w_i \right]^{\frac{1}{2}}.
\end{aligned}
\]
The fact that $\varrho_{w_i}(0)=0$ and that $\varrho_{w_i}$ is $1$-Lipschitz allows us to apply the comparison theorem \cite[Theorem~4.12]{Ledoux2013}
\[
\EE\left[ \sup_{\mathbf{t} \in T_{\ab,b}} \left|  \frac{1}{n}\sum_{i=1}^n \varepsilon_i \varrho_{w_i} (t_i) \right| \right] \leq 2\EE\left[ \sup_{\mathbf{t} \in T_{\ab,b}} \left|  \frac{1}{n}\sum_{i=1}^n \varepsilon_i t_i \right| \right].
\] 
Using this, the fact that $\varepsilon_1,\ldots,\varepsilon_n$ are i.i.d.\ Rademacher random variables we thus obtain
\[\begin{aligned}
	& \EE\left[ \sup_{\xx \in [-M,M]^d} \left|  \frac{1}{n}\sum_{i=1}^n \varepsilon_i w_i \cos(b_i+\ab_i\cdot \xx)  \right| \right]  
	\\ & \leq 2 L^{1}[\widehat{f}]  \EE\left[ \sup_{\mathbf{t} \in T_{\ab,b}} \left|  \frac{1}{n}\sum_{i=1}^n \varepsilon_i t_i \right| \right] + \VV\left[\frac{1}{n}\sum_{i=1}^n \varepsilon_i w_i \right]^{\frac{1}{2}}
	\\ & = 2 L^{1}[\widehat{f}]  \EE\left[ \sup_{\xx } \left|  \frac{1}{n}\sum_{i=1}^n \varepsilon_i (l^{i}(\xx)) \right| \right] %\\ & \qquad 
 + \frac{1}{n} \left|\sum_{i=1}^n\VV\left[ \varepsilon_i w_i \right]\right|^{\frac{1}{2}}
	\\ & \leq 2 L^{1}[\widehat{f}]  \left( \EE\left[\left|  \frac{1}{n}\sum_{i=1}^n \varepsilon_i b_i \right| \right] \right. \\ & \qquad \qquad  \left.
	+
\EE\left[ \sup_{\xx \in [-M,M]^d} \left| \xx \cdot \frac{1}{n}\sum_{i=1}^n \varepsilon_i \ab_i  \right| \right]
	\right)  + \frac{\left\| \mathbf{w} \right\|}{n}.
\end{aligned}
\]
We already estimated the first term (with~$b_i$ instead of~$w_i$) as
\[
\EE\left[\left|  \frac{1}{n}\sum_{i=1}^n \varepsilon_i b_i \right| \right] \leq  \frac{\left\| \mathbf{b} \right\|}{n}.
\]
For the second term, since $\EE[\varepsilon_i \varepsilon_j]=\delta_{i,j}$, then
\[ 
\begin{aligned}
& \EE\left[ \sup_{\xx \in [-M,M]^d} \left| \xx \cdot \frac{1}{n}\sum_{i=1}^n \varepsilon_i \ab_i  \right| \right] \\ & \leq \sup_{\xx \in [-M,M]^d} \| \xx \|_{\infty}
\EE\left[  \left\|\frac{1}{n}\sum_{i=1}^n \varepsilon_i \ab_i  \right\|_1 \right] 
\\ & 
=
\frac{M}{n}
 \sum_{j=1}^d \EE\left[ \left|\sum_{i=1}^n  \varepsilon_i a_{i,j} \right| \right]
 \leq \frac{M}{n}
\sum_{j=1}^d \left(\sum_{i=1}^n a_{i,j}^2 \right)^{\frac{1}{2}}
\end{aligned}
\] 
and thus 
\[\begin{aligned}
	& \EE\left[ \sup_{\xx \in [-M,M]^d} \left|  \frac{1}{n}\sum_{i=1}^n \varepsilon_i w_i \cos(l^{i}(\xx))  \right| \right]  
\\ &  \leq 2 L^{1}[\widehat{f}]  \left( \frac{\left\| \mathbf{b} \right\|}{n}
	+
\frac{M}{n}
\sum_{j=1}^d \left(\sum_{i=1}^n a_{i,j}^2 \right)^{\frac{1}{2}}
	\right)  + \frac{\left\| \mathbf{w} \right\|}{n}.
\end{aligned}
\]
Combining these estimates with Jensen's inequality yields
{\small 
\[\begin{aligned}
	&\EE \left[ \sup_{x \in [-M,M]^d} \left| \frac{1}{n}\sum_{i=1}^n \varepsilon_i U_{i,\xx}\right| \right]  \\ &  = \EE\left[ \left. \EE\left[ \sup_{\xx \in [-M,M]^d} \left|  \frac{1}{n}\sum_{i=1}^n \varepsilon_i w_i \cos(l^{i}(\xx))  \right| \right]  \right|_{(\mathbf{w},\ab,\mathbf{b})=(\mathbf{W},\mathbf{A},\mathbf{B})} \right]
	\\  & \leq  2 L^{1}[\widehat{f}]  \left[\frac{\EE\left[ \left\| \mathbf{B} \right\|\right]}{n}
	+
	\frac{M }{n}
\sum_{j=1}^d \EE\left[ \left(\sum_{i=1}^n| A_{i,j}|^2 \right)^{\frac{1}{2}}\right]
	\right] + \frac{\EE\left[\left\| \mathbf{W} \right\|\right]}{n} 
	\\ & \leq 2 L^{1}[\widehat{f}]  \left(\frac{1}{\sqrt{n}} \EE[B_1^2]^{\frac{1}{2}}
	+ \frac{M}{\sqrt{n}}
\sum_{j=1}^d \EE\left[| A_{1,j}|^2 \right]^{\frac{1}{2}}
	\right)  + \frac{ \EE\left[ \left| W_1 \right|^2 \right]^{\frac{1}{2}}}{\sqrt{n}} 
	\\ & \leq 2 L^{1}[\widehat{f}]  \left({\frac{\pi}{2}} \frac{1}{\sqrt{n}} 
	+ \frac{M d^{\frac{1}{2}}}{\sqrt{n}}
	\EE[\|\mathbf{A}_1\|^2]^{\frac{1}{2}}
	\right)  + \frac{L^{1}[\widehat{f}]}{\sqrt{n}}.
\end{aligned}
\]
}
Inserting the definition of $\mathbf{A}_1$ yields
\[\begin{aligned}
&	\EE[\|\mathbf{A}_1\|^2]^{\frac{1}{2}}  = 	2 \pi \EE[\| Z_1 \bU_1+(1-Z_1)\bV_1\|^2]^{\frac{1}{2}}
\\ & = 	2 \pi \left(\pf \int_{\RR^d} \| \bx\|^2 \nu_1(\D \bx)  +(1-\pf)\int_{\RR^d} \| \bx\|^2 \nu_0(\D \bx) \right)^{\frac{1}{2}}
\\ & = 	2 \pi \LOnefhat^{-1} \left(  \int_{\RR^d} \| \bx\|^2 \left\{|\mathrm{Re}[\widehat{f}](\bx)| + |\mathrm{Im}[\widehat{f}](\bx)| \right\} \D \bx \right)^{\frac{1}{2}}
\\ & = 	2 \pi \LOnefhat^{-1/2 }\left( \int_{\RR^d} \| \bx\|^2 |\widehat{f}(\bx)|\D \bx \right)^{\frac{1}{2}}.
\end{aligned} 
\]
Overall, we obtain the bound
\[\begin{aligned}
&\EE\left[ \sup_{x \in [-M,M]^d} |f(\xx) - F(\xx)| \right]
\\ & \leq 4 L^{1}[\widehat{f}]  \left({\frac{\pi}{2}} \frac{1}{\sqrt{n}} 
+ \frac{M d^{\frac{1}{2}}}{\sqrt{n}}
\EE[\|\mathbf{A}_1\|^2]^{\frac{1}{2}}
\right)  + \frac{2 L^{1}[\widehat{f}]}{\sqrt{n}}
\\ & \leq \frac{2 (\pi+1) L^{1}[\widehat{f}] 
	+ 8 \pi  M d^{\frac{1}{2}}
	\LOnefhat^{\frac{1}{2}}\|f\|_{\mathcal{B}_2}}{\sqrt{n}}.
\end{aligned}
\]

\end{proof}

\subsection{Proof of Corollary~\ref{cor:3}} \label{subsec:proof6}
\begin{proof} 
The proof proceeds similarly as in Corollary~\ref{cor:universality}. 
We first show that~$f$ can be approximated on $\mathcal{X}$ up to error $\frac{\varepsilon}{2}$ by a function in 	$\Cc_c^\infty(\RR^d)$.  Choose $M>0$ such that $\mathcal{X} \subset [-M,M]^d$. 
By Tietze's extension theorem~\cite[Theorem~20.4]{Rudin1987} there exists a bounded continuous function $F\colon \R^d \to \R$ which coincides with~$f$ on $\mathcal{X}$. Now let $\chi \colon \RR^d \to [0,1]$ a continuous function with $\chi(\xx)=1$ for $\xx \in [-M,M]^d$ and $\chi(\xx)=0$ for $\xx \in \RR^d \setminus [-M-1,M+1]^d$. Then, since $\Cc_c^\infty(\RR^d)$ is dense in $\Cc_c(\RR^d)$ in the supremum norm, there exists $h \in \Cc_c^\infty(\RR^d)$ with 
\begin{equation}\label{eq:auxEqCor}
	\sup_{x \in \RR^d} |F(\xx)\chi(\xx)-h(\xx)|  \leq \frac{\varepsilon}{2}.
\end{equation}
We now show how to apply Theorem~\ref{thm:ApproxUniform} to~$h$. 
Since~$h$ is a Schwartz function, its Fourier transform $\widehat{h}$ is as well. In particular~$h$ 
and~$\widehat{h}$ are both integrable and so is $\bx \mapsto \|\bx\|^2 |\widehat{h}(\bx)|$ over~$\RR^d$. 
Choosing $R = L^{1}[\widehat{h}]$ and $n=\lceil (2(2 (\pi+1) L^{1}[\widehat{h}] 
	+ 8 \pi  M d^{\frac{1}{2}}
	L^{1}[\widehat{h}] ^{\frac{1}{2}}\|h\|_{\mathcal{B}_2}) \varepsilon^{-1})^2 \rceil $, Theorem~\ref{thm:ApproxUniform} yields the existence of $\ttheta\in\TTheta$ such that 
\[
\sup_{\xx \in [-M,M]^d} |h(\xx)-f_{\ttheta}(\xx)| \leq \frac{\varepsilon}{2}. 
\]
This estimate together with~\eqref{eq:auxEqCor} then imply
\begin{equation*}
\begin{aligned}
& 	\sup_{\xx \in \mathcal{X}} |f(\xx)-f_{\ttheta}(\xx)|
\\ & \leq 	\sup_{\xx \in \mathcal{X}} |F(\xx)\chi(\xx)-h(\xx)| + \sup_{\xx \in \mathcal{X}} |h(\xx)-f_{\ttheta}(\xx)|  \leq  \varepsilon.
\end{aligned}
\end{equation*}
\end{proof}

\subsection{Proof of Theorem \ref{thm:ApproxUniformStronger}}\label{subsec:proof7}
\begin{proof}
The proof follows by combining the arguments from the proofs of Theorem~\ref{thm:random} and Theorem~\ref{thm:ApproxUniform}. Analogously to the proof of the latter, we need to derive an $L^\infty$-error bound instead of the $L^2$-error bound in Proposition~\ref{prop:FourierApproxRandom}.
%Let us choose $\pi_a = \frac{|\widehat{f}|}{\LOnefhat}$, which is well-defined since  $f \in \Ff_{R}$.  For this choice of $\pi_a$ we have $\Lbarf = \sqrt{2} \LOnefhat < \infty$ and 
We may use $\EE[G_{\Wf}(\xx)]=f(\xx)$ to proceed precisely as in the proof of   Theorem~\ref{thm:ApproxUniform} and obtain
\[
\begin{aligned}
& \EE\left[\sup_{\xx \in [-M,M]^d} \left|G_{\Wf}(\xx)-f(\xx)\right| \right]  \leq \frac{ \EE\left[ \left| W_1 \right|^2 \right]^{\frac{1}{2}}}{\sqrt{n}}
\\ & \quad  + \frac{8\pi}{\sqrt{n}}
\left\|\frac{\widehat{f}}{\pi_a}\right\|_\infty  \left[\frac{ \EE[B_1^2]^{\frac{1}{2}}}{2} 
 + 2 M d^{\frac{1}{2}}
\EE[\|\mathbf{A}_1\|^2]^{\frac{1}{2}}
\right],
\end{aligned}
\]
where we used the boundedness of $\widehat{f}/\pi_a$ to guarantee that $w_i \leq 2 \|\frac{\widehat{f}}{\pi_a}\|_\infty$ and the functions~$\varrho_{w_i}$ can be chosen $1$-Lipschitz and the comparison theorem \cite[Theorem~4.12]{Ledoux2013} can be applied. 
We recall from the proof of Proposition~\ref{prop:FourierApproxRandom} that $\EE[|W_1|^2]^{\frac{1}{2}} =\Lbarf$ and calculate
 $\EE[B_1^2]=\frac{1}{2}$. Inserting these expressions, we obtain
 \[
 \begin{aligned}
 & \EE\left[\sup_{\xx \in [-M,M]^d} \left|G_{\Wf}(\xx)-f(\xx)\right| \right]
\\ &  \leq \frac{8\pi}{\sqrt{n}} \left\|\frac{\widehat{f}}{\pi_a}\right\|_\infty  \left(2^{-\frac{3}{2}}
 + 2 M d^{\frac{1}{2}} \sqrt{\EE[\|\mathbf{A}_1\|^2]}
 \right)  + \frac{ \Lbarf}{\sqrt{n}}.
 \end{aligned}
 \]
\end{proof}

\section*{Acknowledgment}
AJ is supported by the EPSRC grants EP/W032643/1 and  EP/T032146/1.
% Can use something like this to put references on a page
% by themselves when using endfloat and the captionsoff option.

%\ifCLASSOPTIONcaptionsoff
%\newpage
%\fi

% trigger a \newpage just before the given reference
% number - used to balance the columns on the last page
% adjust value as needed - may need to be readjusted if
% the document is modified later
%\IEEEtriggeratref{8}
% The "triggered" command can be changed if desired:
%\IEEEtriggercmd{\enlargethispage{-5in}}

% references section

%%%%%%%%%%%%%%%%%%%%%%%%%%%%%%%%%%%%%%%%%%
%\newpage
\bibliographystyle{IEEEtran}
\bibliography{references}

% Generated by IEEEtran.bst, version: 1.14 (2015/08/26)
\begin{thebibliography}{10}
\providecommand{\url}[1]{#1}
\csname url@samestyle\endcsname
\providecommand{\newblock}{\relax}
\providecommand{\bibinfo}[2]{#2}
\providecommand{\BIBentrySTDinterwordspacing}{\spaceskip=0pt\relax}
\providecommand{\BIBentryALTinterwordstretchfactor}{4}
\providecommand{\BIBentryALTinterwordspacing}{\spaceskip=\fontdimen2\font plus
\BIBentryALTinterwordstretchfactor\fontdimen3\font minus
  \fontdimen4\font\relax}
\providecommand{\BIBforeignlanguage}[2]{{%
\expandafter\ifx\csname l@#1\endcsname\relax
\typeout{** WARNING: IEEEtran.bst: No hyphenation pattern has been}%
\typeout{** loaded for the language `#1'. Using the pattern for}%
\typeout{** the default language instead.}%
\else
\language=\csname l@#1\endcsname
\fi
#2}}
\providecommand{\BIBdecl}{\relax}
\BIBdecl

\bibitem{Kolmogorov1957}
A.~N. Kolmogorov, ``On the representation of continuous functions of many
  variables by superposition of continuous functions of one variable and
  addition,'' \emph{Dokl. Akad. Nauk SSSR}, vol. 114, pp. 953--956, 1957.

\bibitem{Arnold1957}
V.~I. Arnold, ``On functions of three variables,'' \emph{Dokl. Akad. Nauk
  SSSR}, vol. 114, pp. 679--681, 1957.

\bibitem{cybenko1989approximation}
G.~Cybenko, ``Approximation by superpositions of a sigmoidal function,''
  \emph{Mathematics of Control, Signals and Systems}, vol.~2, no.~4, pp.
  303--314, 1989.

\bibitem{hornik1991}
K.~Hornik, ``{Approximation capabilities of multilayer feedforward networks},''
  \emph{Neural Networks}, vol.~4, no. 1989, pp. 251--257, 1991.

\bibitem{hornik1989multilayer}
K.~Hornik, M.~Stinchcombe, and H.~White, ``Multilayer feedforward networks are
  universal approximators,'' \emph{Neural Networks}, vol.~2, no.~5, pp.
  359--366, 1989.

\bibitem{leshno1993multilayer}
M.~Leshno, V.~Y. Lin, A.~Pinkus, and S.~Schocken, ``Multilayer feedforward
  networks with a nonpolynomial activation function can approximate any
  function,'' \emph{Neural Networks}, vol.~6, no.~6, pp. 861--867, 1993.

\bibitem{gripenberg2003approximation}
G.~Gripenberg, ``Approximation by neural networks with a bounded number of
  nodes at each level,'' \emph{Journal of Approximation Theory}, vol. 122,
  no.~2, pp. 260--266, 2003.

\bibitem{kidger2020universal}
P.~Kidger and T.~Lyons, ``{Universal Approximation with Deep Narrow
  Networks},'' in \emph{Proceedings of Thirty Third Conference on Learning
  Theory}, ser. Proceedings of Machine Learning Research, J.~Abernethy and
  S.~Agarwal, Eds., vol. 125.\hskip 1em plus 0.5em minus 0.4em\relax PMLR,
  09--12 Jul 2020, pp. 2306--2327.

\bibitem{Barron1992}
A.~R. Barron, ``Neural net approximation,'' in \emph{Proceedings of the Seventh
  Yale Workshop on Adaptive and Learning Systems}, K.~S. Narendra, Ed., vol.~1,
  1992, pp. 69--72.

\bibitem{Barron1993}
------, ``Universal approximation bounds for superpositions of a sigmoidal
  function,'' \emph{IEEE Transactions on Information Theory}, vol.~39, no.~3,
  pp. 930--945, 1993.

\bibitem{Barron1994ApproximationAE}
------, ``Approximation and estimation bounds for artificial neural networks,''
  \emph{Machine Learning}, vol.~14, pp. 115--133, 1994.

\bibitem{Mhaskar1996}
H.~N. Mhaskar, ``{Neural networks for optimal approximation of smooth and
  analytic functions},'' \emph{Neural computation}, vol.~8, no.~1, pp.
  164--177, 1996.

\bibitem{Boelcskei2019}
H.~Bölcskei, P.~Grohs, G.~Kutyniok, and P.~Petersen, ``Optimal approximation
  with sparsely connected deep neural networks,'' \emph{SIAM Journal on
  Mathematics of Data Science}, vol.~1, no.~1, pp. 8--45, 2019.

\bibitem{Guehring2020}
I.~G\"{u}hring, G.~Kutyniok, and P.~Petersen, ``Error bounds for approximations
  with deep {R}e{LU} neural networks in {$W^{s,p}$} norms,'' \emph{Anal. Appl.
  (Singap.)}, vol.~18, no.~5, pp. 803--859, 2020.

\bibitem{yarotsky2017error}
D.~Yarotsky, ``Error bounds for approximations with deep relu networks,''
  \emph{Neural Networks}, vol.~94, pp. 103--114, 2017.

\bibitem{Guehring2023}
I.~G\"{u}hring, M.~Raslan, and G.~Kutyniok, ``Expressivity of deep neural
  networks,'' in \emph{Mathematical aspects of deep learning}.\hskip 1em plus
  0.5em minus 0.4em\relax Cambridge University Press, 2023, pp. 149--199.

\bibitem{preskill2012quantum}
J.~Preskill, ``Quantum computing and the entanglement frontier,'' 2012,
  arXiv:1203.5813.

\bibitem{cordier2022biology}
B.~A. Cordier, N.~P. Sawaya, G.~G. Guerreschi, and S.~K. McWeeney, ``Biology
  and medicine in the landscape of quantum advantages,'' \emph{Journal of the
  Royal Society Interface}, vol.~19, no. 196, p. 20220541, 2022.

\bibitem{stamatopoulos2022towards}
N.~Stamatopoulos, G.~Mazzola, S.~Woerner, and W.~J. Zeng, ``Towards quantum
  advantage in financial market risk using quantum gradient algorithms,''
  \emph{Quantum}, vol.~6, p. 770, 2022.

\bibitem{RahimiRecht2008a}
A.~Rahimi and B.~Recht, ``Random features for large-scale kernel machines,'' in
  \emph{Advances in Neural Information Processing Systems}, J.~Platt,
  D.~Koller, Y.~Singer, and S.~Roweis, Eds., vol.~20.\hskip 1em plus 0.5em
  minus 0.4em\relax Curran Associates, Inc., 2007.

\bibitem{RahimiRecht2008}
------, ``Weighted sums of random kitchen sinks: Replacing minimization with
  randomization in learning,'' in \emph{Advances in Neural Information
  Processing Systems}, D.~Koller, D.~Schuurmans, Y.~Bengio, and L.~Bottou,
  Eds., vol.~21.\hskip 1em plus 0.5em minus 0.4em\relax Curran Associates,
  Inc., 2009, pp. 1313--1320.

\bibitem{HCS2006}
G.-B. Huang, L.~Chen, and C.-K. Siew, ``Universal approximation using
  incremental constructive feedforward networks with random hidden nodes,''
  \emph{Trans. Neur. Netw.}, vol.~17, no.~4, p. 879–892, Jul. 2006.

\bibitem{Jaeger2001}
H.~Jaeger, ``The "echo state" approach to analysing and training recurrent
  neural networks-with an erratum note','' \emph{Bonn, Germany: German National
  Research Center for Information Technology GMD Technical Report}, vol. 148,
  01 2001.

\bibitem{LukoseviciusJaeger2009}
M.~Luko\v{s}evi\v{c}ius and H.~Jaeger, ``Reservoir computing approaches to
  recurrent neural network training,'' \emph{Comput. Sci. Rev.}, vol.~3, pp.
  127--149, 2009.

\bibitem{TANAKA2019100}
G.~Tanaka, T.~Yamane, J.~B. Héroux, R.~Nakane, N.~Kanazawa, S.~Takeda,
  H.~Numata, D.~Nakano, and A.~Hirose, ``Recent advances in physical reservoir
  computing: A review,'' \emph{Neural Networks}, vol. 115, pp. 100--123, 2019.

\bibitem{RC8}
L.~Gonon and J.-P. Ortega, ``{Reservoir computing universality with stochastic
  inputs},'' \emph{IEEE Transactions on Neural Networks and Learning Systems},
  vol.~31, no.~1, pp. 100--112, 2020.

\bibitem{RC20}
------, ``{Fading memory echo state networks are universal},'' \emph{Neural
  Networks}, vol. 138, pp. 10--13, 2021.

\bibitem{GrigOrtega2018}
L.~Grigoryeva and J.-P. Ortega, ``{Echo state networks are universal},''
  \emph{Neural Networks}, vol. 108, pp. 495--508, 2018.

\bibitem{GrigoryevaOrtega2018}
------, ``{Universal discrete-time reservoir computers with stochastic inputs
  and linear readouts using non-homogeneous state-affine systems},''
  \emph{Journal of Machine Learning Research}, vol.~19, no.~24, pp. 1--40,
  2018.

\bibitem{Gonon2021}
L.~Gonon, ``Random feature neural networks learn black-scholes type pdes
  without curse of dimensionality,'' \emph{J. Mach. Learn. Res.}, vol.~24, no.
  189, pp. 1--51, 2023.

\bibitem{RC12}
L.~Gonon, L.~Grigoryeva, and J.-P. Ortega, ``{Approximation error estimates for
  random neural networks and reservoir systems},'' \emph{The Annals of Applied
  Probability}, vol.~33, no.~1, pp. 28--69, 2023.

\bibitem{FujiiNakajima2017}
K.~Fujii and K.~Nakajima, ``Harnessing disordered-ensemble quantum dynamics for
  machine learning,'' \emph{Phys. Rev. Appl.}, vol.~8, p. 024030, Aug 2017.

\bibitem{Dasgupta2020}
S.~Dasgupta, K.~Hamilton, P.~Lougovski, and A.~Banerjee, ``Designing a nisq
  reservoir with maximal memory capacity for volatility forecasting,'' 2020,
  arXiv:2004.08240.

\bibitem{MPOrtega2022}
R.~Martínez-Peña and J.-P. Ortega, ``Quantum reservoir computing in finite
  dimensions,'' \emph{Physical Review E}, vol. 107, 12 2022.

\bibitem{MOLTENI2023}
R.~Molteni, C.~Destri, and E.~Prati, ``Optimization of the memory reset rate of
  a quantum echo-state network for time sequential tasks,'' \emph{Physics
  Letters A}, vol. 465, p. 128713, 2023.

\bibitem{Suzukietal2021}
Y.~Suzuki, Q.~Gao, K.~Pradel, K.~Yasuoka, and N.~Yamamoto, ``Natural quantum
  reservoir computing for temporal information processing,'' \emph{Scientific
  reports}, vol.~12, 2022.

\bibitem{Ghoshetal2021}
S.~Ghosh, K.~Nakajima, T.~Krisnanda, K.~Fujii, and T.~C.~H. Liew, ``Quantum
  neuromorphic computing with reservoir computing networks,'' \emph{Advanced
  Quantum Technologies}, vol.~4, no.~9, p. 2100053, 2021.

\bibitem{Mujaletal2021}
P.~Mujal, R.~Martínez-Peña, J.~Nokkala, J.~García-Beni, G.~L. Giorgi, M.~C.
  Soriano, and R.~Zambrini, ``Opportunities in quantum reservoir computing and
  extreme learning machines,'' \emph{Advanced Quantum Technologies}, vol.~4,
  no.~8, p. 2100027, 2021.

\bibitem{PerezSalinas2020datareuploading}
A.~P{\'{e}}rez-Salinas, A.~Cervera-Lierta, E.~Gil-Fuster, and J.~I. Latorre,
  ``Data re-uploading for a universal quantum classifier,'' \emph{{Quantum}},
  vol.~4, p. 226, Feb. 2020.

\bibitem{perez2021one}
A.~P{\'e}rez-Salinas, D.~L{\'o}pez-N{\'u}{\~n}ez, A.~Garc{\'\i}a-S{\'a}ez,
  P.~Forn-D{\'\i}az, and J.~I. Latorre, ``One qubit as a universal
  approximant,'' \emph{Physical Review A}, vol. 104, no.~1, p. 012405, 2021.

\bibitem{Schuld2021}
M.~Schuld, R.~Sweke, and J.~J. Meyer, ``Effect of data encoding on the
  expressive power of variational quantum-machine-learning models,''
  \emph{Physical Review A}, vol. 103, p. 032430, Mar 2021.

\bibitem{Chen2019}
J.~Chen and H.~Nurdin, ``Learning nonlinear input–output maps with
  dissipative quantum systems,'' \emph{Quantum Information Processing},
  vol.~18, 05 2019.

\bibitem{Chenetal2020}
J.~Chen, H.~I. Nurdin, and N.~Yamamoto, ``Temporal information processing on
  noisy quantum computers,'' \emph{Phys. Rev. Appl.}, vol.~14, p. 024065, Aug
  2020.

\bibitem{Nokkala2021}
J.~Nokkala, R.~Martínez-Peña, G.~Giorgi, V.~Parigi, M.~Soriano, and
  R.~Zambrini, ``Gaussian states of continuous-variable quantum systems provide
  universal and versatile reservoir computing,'' \emph{Communications Physics},
  vol.~4, 03 2021.

\bibitem{wu2021expressivity}
Y.~Wu, J.~Yao, P.~Zhang, and H.~Zhai, ``Expressivity of quantum neural
  networks,'' \emph{Physical Review Research}, vol.~3, no.~3, p. L032049, 2021.

\bibitem{wu2024expressivity}
C.-H. Wu and C.-C. Yen, ``The expressivity of classical and quantum neural
  networks on entanglement entropy,'' \emph{The European Physical Journal C},
  vol.~84, no.~2, p. 192, 2024.

\bibitem{wu2024randomness}
Y.~Wu, J.~Yao, P.~Zhang, and X.~Li, ``Randomness-enhanced expressivity of
  quantum neural networks,'' \emph{Physical Review Letters}, vol. 132, no.~1,
  p. 010602, 2024.

\bibitem{Goto2021}
T.~Goto, Q.~H. Tran, and K.~Nakajima, ``Universal approximation property of
  quantum machine learning models in quantum-enhanced feature spaces,''
  \emph{Phys. Rev. Lett.}, vol. 127, p. 090506, Aug 2021.

\bibitem{balestriero2021learning}
R.~Balestriero, J.~Pesenti, and Y.~LeCun, ``Learning in high dimension always
  amounts to extrapolation,'' 2021, arXiv:2110.09485.

\bibitem{siegel2022parallel}
A.~Siegel, ``A parallel algorithm for understanding design spaces and
  performing convex hull computations,'' \emph{Journal of Computational
  Mathematics and Data Science}, vol.~2, p. 100021, 2022.

\bibitem{xu2020neural}
K.~Xu, M.~Zhang, J.~Li, S.~S. Du, K.-i. Kawarabayashi, and S.~Jegelka, ``How
  neural networks extrapolate: From feedforward to graph neural networks,''
  2020, arXiv:2009.11848.

\bibitem{MMM21}
S.~Mei, T.~Misiakiewicz, and A.~Montanari, ``Generalization error of random
  features and kernel methods: hypercontractivity and kernel matrix
  concentration,'' \emph{arXiv:2101.10588}, 2021.

\bibitem{RudiRosasco2017}
A.~Rudi and L.~Rosasco, ``Generalization properties of learning with random
  features,'' in \emph{Advances in Neural Information Processing Systems},
  I.~Guyon, U.~V. Luxburg, S.~Bengio, H.~Wallach, R.~Fergus, S.~Vishwanathan,
  and R.~Garnett, Eds., vol.~30.\hskip 1em plus 0.5em minus 0.4em\relax Curran
  Associates, Inc., 2017.

\bibitem{Folland1995}
G.~Folland, \emph{Introduction to Partial Differential Equations, 2nd
  Edition}.\hskip 1em plus 0.5em minus 0.4em\relax Princeton University Press,
  1995.

\bibitem{PowerOfData}
H.-Y. Huang, M.~Broughton, M.~Mohseni, R.~Babbush, S.~Boixo, H.~Neven, and
  J.~Mcclean, ``Power of data in quantum machine learning,'' \emph{Nature
  Communications}, vol.~12, 05 2021.

\bibitem{Barren}
J.~Mcclean, S.~Boixo, V.~Smelyanskiy, R.~Babbush, and H.~Neven, ``Barren
  plateaus in quantum neural network training landscapes,'' \emph{Nature
  Communications}, vol.~9, 11 2018.

\bibitem{mitarai2018quantum}
K.~Mitarai, M.~Negoro, M.~Kitagawa, and K.~Fujii, ``Quantum circuit learning,''
  \emph{Physical Review A}, vol.~98, no.~3, p. 032309, 2018.

\bibitem{Kallenberg2002}
O.~Kallenberg, \emph{{Foundations of Modern Probability, 2nd Edition}}, ser.
  Probability and Its Applications.\hskip 1em plus 0.5em minus 0.4em\relax
  Springer New York, 2002.

\bibitem{Ledoux2013}
M.~Ledoux and M.~Talagrand, \emph{{Probability in Banach Spaces}}.\hskip 1em
  plus 0.5em minus 0.4em\relax Springer Berlin Heidelberg, 2013.

\bibitem{Rudin1987}
W.~Rudin, \emph{Real \& Complex Analysis}.\hskip 1em plus 0.5em minus
  0.4em\relax McGraw-Hill; 3rd edition, 1987.

\bibitem{krol2022efficient}
A.~M. Krol, A.~Sarkar, I.~Ashraf, Z.~Al-Ars, and K.~Bertels, ``Efficient
  decomposition of unitary matrices in quantum circuit compilers,''
  \emph{Applied Sciences}, vol.~12, no.~2, p. 759, 2022.

\bibitem{li2013decomposition}
C.-K. Li, R.~Roberts, and X.~Yin, ``Decomposition of unitary matrices and
  quantum gates,'' \emph{International Journal of Quantum Information},
  vol.~11, no.~01, p. 1350015, 2013.

\end{thebibliography}
%%%%%%%%%%%%%%%%%%%%%%%%%%%%%%%%%%%%%%%%%%

\appendix
\subsection{Construction of $\ket{\psi}$}
\label{sec:n0}

\begin{example}
	To clarify the somewhat abstract construction above, we provide some examples to make it more explicit:
	{\tiny
		\begin{center}
			\begin{tabular}{|c|c|c|c|}
				\hline
				$n$ & $n_0$ & $\nq$ & $\ket{\psi}$\\
				\hline
				1 & 0 & 2 & $\ket{00}$\\
				2 & 0 & 3 & $\displaystyle \frac{1}{\sqrt{2}}\left(\ket{000}+\ket{100}\right)$\\
				3 & 4 & 4 & $\displaystyle \frac{1}{\sqrt{3}}\left(\ket{0000}+\ket{0100}+\ket{1000}\right)$\\
				4 & 0 & 4 & $\displaystyle \frac{1}{\sqrt{4}}\left(
				\ket{0000}+\ket{0100}+\ket{1000}+ \ket{1100}\right)$\\
				5 & 12 & 5 & $\displaystyle \frac{1}{\sqrt{5}}\left(
				\ket{00000}+\ket{00100}+\ket{01000}+\ket{01100}+\ket{10000}
				\right)$\\
				\hline
			\end{tabular}
		\end{center}
	}
\end{example}
{
	In the case $n_0=0$, 
	so that $n = 2^{\nq-2}$,
	the quantum state $\ket{\psi}$
	is easy to construct explicitly:
	introduce the sequences
	$(n_l)_{l\geq 0}$ and $(\nq_l)_{l\geq 0}$ as
	$\nq_l := l+2$ and $n_l := 2^{\nq_l-2}$ (so that $4n_{l} = 2^{\nq_l}$),
	and define $\ket{\psi_l}_{\nq_l}$
	as the $\nq_l$-qubit quantum state defined as
	$$
	\ket{\psi_l}_{\nq_l}
	:= \frac{1}{\sqrt{n_l}}\sum_{i=0}^{n_l-1}\ket{4i}_{\nq_l},
	$$
	where we write 
	$\ket{4i}_{\nq_l}
	= \ket{j_{0}^{(i)}\cdots j_{\nq_l-1}^{(i)}} 
	= \ket{j_{0}^{(i)}}\otimes\cdots\otimes\ket{j_{\nq_l-1}^{(i)}}$ in binary form,
	with 
	$\displaystyle 4i = \sum_{k=0}^{\nq_l-1}j_{k}^{(i)}\cdot 2^{\nq_l-1-k}$ and $j_{k}^{(i)} \in \{0,1\}$.
	Since we are only considering multiples of~$4$, then
	$\ket{4i}_{\nq_l}
	= \ket{j_{0}^{(i)}\cdots j_{\nq_l-3}^{(i)}}\otimes\ket{00}$ for $\nq_{l}\geq 3$ and $\ket{4i}_{2} = \ket{00}$ with $i=0$.
	\begin{lemma}\label{lem:ConstrPsi}
		For any $l \in \mathbb{N}$, we have
		$$
		\ket{\psi_{l}}_{\nq_l} = \left(\bigotimes_{i=0}^{\nq_{l}-2}
		\Hg\ket{0}\right)\otimes \ket{00}.
		$$
	\end{lemma}
	\begin{proof}
		The proof follows by recursion.
		Since we are only considering multiples of~$4$ the two rightmost qubits are clearly zero and are left unchanged.
		Assume that we know 
		\begin{align*}
			\ket{\psi_{l}}_{\nq_l}
			&= \frac{1}{\sqrt{n_{l}}}\sum_{i=0}^{n_{l}-1}\ket{4i}_{\nq_l}
			= \frac{1}{\sqrt{n_{l}}}
			\sum_{i=0}^{n_{l}-1}
			\ket{j_{0}^{(i)}\cdots j_{\nq_l-1}^{(i)}}\\
			& = \frac{1}{\sqrt{n_{l}}}
			\left(\sum_{i=0}^{n_{l}-1}
			\ket{j_{0}^{(i)}\cdots j_{\nq_l-3}^{(i)}}\right)\otimes\ket{00},
		\end{align*}
		for some $l\geq 3$, with $j_{0}^{(i)} \in \{0,1\}$.
		Then
		\begin{align*}
			\ket{\psi_{l+1}}_{\nq_l+1}
			& = \frac{1}{\sqrt{n_{l+1}}}\sum_{i=0}^{n_{l+1}-1}\ket{4i}_{\nq_{l+1}}\\
			& = \frac{1}{\sqrt{n_{l+1}}}
			\left(\sum_{i=0}^{n_{l}-1}\ket{4i}_{\nq_{l+1}}
			+ \sum_{i=n_{l}}^{n_{l+1}-1}\ket{4i}_{\nq_{l+1}}\right)\\
			&  = \frac{1}{\sqrt{n_{l+1}}}
			\bigg(\sum_{i=0}^{n_{l}-1}
			\ket{0 j_{0}^{(i)}\cdots j_{\nq_l-3}^{(i)}}_{\nq_l-1}\otimes\ket{00}
			\bigg.\\
			& \qquad \qquad \bigg.+ \sum_{i=n_{l}}^{n_{l+1}-1}\ket{1j_{0}^{(i)}\cdots j_{\nq_l-3}^{(i)}}_{\nq_{l}-1}\otimes\ket{00}\bigg)\\
			& = \frac{1}{\sqrt{n_{l+1}}}
			\left(\sqrt{n_{l}}
			\ket{0}\otimes\ket{\psi_{l}}_{\nq_l}
			+ \ket{1}\otimes\sqrt{n_{l}}\ket{\psi_{l}}_{\nq_l}\right)\\
			& = (\Hg\ket{0})\otimes\ket{\psi_{l}}_{\nq_l},
		\end{align*}
		since $\frac{n_l}{n_{l+1}}=\frac{1}{2}$,
		and the lemma follows.
	\end{proof}
	While $\ket{\psi}$ is easy to construct, the unitary operator~$\Ug$ is less so, at least in an explicit form. 
	Several algorithms exist to decompose a general unitary matrix into one-qubit or two-qubit quantum gates, as in~\cite{krol2022efficient} or~\cite{li2013decomposition}
	and we leave the implementation of these algorithms to future applied work.
}
\subsection{Approximation of quantum circuits}
{
For completeness, here we prove, as a special case, that the circuit constructed in Section~\ref{sec:variational} serves as universal approximator for generic parametrized quantum circuits. Consider a quantum circuit acting on $m$ qubits via a parametric gate $\tilde{\Ug}(\tilde{\ttheta},\xx)$ for inputs $\xx \in \R^d$ and parameters $\tilde{\ttheta} \in \R^K$. Assume  that $\xx \mapsto \tilde{\Ug}(\tilde{\ttheta},\xx)$ is a measurable function $\R^d \to \C^{2^m}$ for any  $\tilde{\ttheta} \in \R^K$.  Denote by $\ket{\psi(\tilde{\ttheta},\xx)}$ the state prepared by $\tilde{\Ug}(\tilde{\ttheta},\xx) \ket{0}^{\otimes m}$, and let $\mathcal{M} = \sum_{i=1}^{2^m} \nu_i \ket{\nu_i} \bra{\nu_i}$ denote a measurement in diagonal representation.  %be the measurement operator in the computational basis of $\C^{2^m}$. 
Then the variational quantum circuit output is 
\[
g_{\tilde{\ttheta}}(\xx) = \bra{\psi(\tilde{\ttheta},\xx)} \mathcal{M} \ket{\psi(\tilde{\ttheta},\xx)} = \sum_{i=1}^{2^m} \nu_i p_i(\tilde{\ttheta},\xx) 
\]
with $p_i(\tilde{\ttheta},\xx) = |\bra{\nu_i} \ket{\psi(\tilde{\ttheta},\xx)}|^2$ the probability of measuring outcome $\nu_i$. In particular, $p_i(\tilde{\ttheta},\xx) \in [0,1]$ and hence $g_{\tilde{\ttheta}} \in L^2(\R^d,\mu)$ for any probability measure $\mu$ on $\R^d$. Corollary~\ref{cor:universality} hence proves that for any $\varepsilon >0$ there exist $n \in \N$, $R>0$ and $\ttheta\in\TTheta$ such that 
$\Cg_{\nq}(\ttheta,\xx)$ outputs  $f_{\ttheta}(\xx)$ with 
	\begin{equation*}
		\left(\int_{\RR^d} |g_{\tilde{\ttheta}}(\xx)-f_{\ttheta}(\xx)|^2 \mu(\D\xx)\right)^{1/2} \leq  \varepsilon.
	\end{equation*}
Thus, the circuits constructed in Section~\ref{sec:variational} are capable of approximating arbitrarily well the outputs of generic parametrized quantum circuits.
Note that this statement concerns the variational quantum circuit as input-output mapping, rather than a direct approximator of generic quantum gates. 
}
{
\subsection{Monte Carlo error}
\label{appendixC}
In our definition of the quantum neural networks in~\eqref{eq:qnn} and~\eqref{eq:randomF} we work with probabilities, rather than their Monte Carlo estimates~\eqref{eq:probestimate}. 
This simplification is not a limitation though, as the Monte Carlo error can be incorporated directly. 
Indeed, denote by $f_{n,\ttheta}^{R,S}$ the quantum neural network with output probabilities estimated by $S$ shots. Recall that  for i.i.d.\ random variables $X_1,\ldots,X_S$ then $\EE[| \EE[X_1]
- \frac{1}{S} \sum_{s=1}^S X_s |^2] = \mathrm{Var}(\frac{1}{S} \sum_{s=1}^S X_s ) = \frac{\mathrm{Var}(X_1)}{S}$, and hence
		\begin{equation*}
	\begin{aligned}
		&\EE\left[\int_{\RR^d} 
		\left|f_{n,\ttheta}^{R}(\xx) - f_{n,\ttheta}^{R,S}(\xx)\right|^2 \mu(\D \xx)\right]^{1/2}
		\\& 
		\leq  2R \sum_{j=1}^2 \left(\int_{\RR^d} 
		\EE\left[\left| \P_j^{n}(\ttheta,\xx)
		- \widehat{\P}_j^{n}(\ttheta,\xx)\right|^2\right] \mu(\D \xx)\right)^{1/2}
		\\& 
		\leq  \frac{4R}{\sqrt{S}}.
	\end{aligned}
\end{equation*} 
Minkowski's inequality and Theorem~\ref{thm:Approx} therefore yield an approximation error bound for the network $f_{n,\ttheta}^{R,S}$: 
		\begin{equation*}
			\begin{aligned}
				&\left(\EE\left[\int_{\RR^d} 
				\left|f(\xx) - f_{n,\ttheta}^{R,S}(\xx)\right|^2 \mu(\D \xx)\right]\right)^{1/2}
			%	\\ &
			%	\leq  \frac{\LOnefhat}{\sqrt{n}} + \left(\EE\left[\int_{\RR^d} 
			%	\left|f_{n,\ttheta}^{R}(\xx) - f_{n,\ttheta}^{R,S}(\xx)\right|^2 \mu(\D \xx)\right]\right)^{1/2}
				%\\& 
				\leq  \frac{\LOnefhat}{\sqrt{n}} + \frac{4R}{\sqrt{S}}.
			\end{aligned}
		\end{equation*}
The same argument also applies to the network \eqref{eq:randomF}. 
}

%%%%%%%%%%%%%%%%%%%%%%%%%%%%%%%%%%%%%%%%%%%%%%%%%%%%%%%%%%%%%%%%%%%%%%%%%%%%%%%
%%%%%%%%%%%%%%%%%%%%%%%%%%%%%%%%%%%%%%%%%%%%%%%%%%%%%%%%%%%%%%%%%%%%%%%%%%%%%%%%

% biography section
% 
% If you have an EPS/PDF photo (graphicx package needed) extra braces are
% needed around the contents of the optional argument to biography to prevent
% the LaTeX parser from getting confused when it sees the complicated
% \includegraphics command within an optional argument. (You could create
% your own custom macro containing the \includegraphics command to make things
% simpler here.)
%\begin{IEEEbiography}[{\includegraphics[width=1in,height=1.25in,clip,keepaspectratio]{mshell}}]{Michael Shell}
% or if you just want to reserve a space for a photo:

% if you will not have a photo at all:
%\begin{IEEEbiography}{Juan-Pablo Ortega}
%Biography text here.
%\end{IEEEbiography}

% insert where needed to balance the two columns on the last page with
% biographies
%\newpage

% You can push biographies down or up by placing
% a \vfill before or after them. The appropriate
% use of \vfill depends on what kind of text is
% on the last page and whether or not the columns
% are being equalized.

%\vfill

% Can be used to pull up biographies so that the bottom of the last one
% is flush with the other column.
%\enlargethispage{-5in}

\end{document}